\DeclareMathOperator{\dist}{dist}
\title{Multi-level Weighted Additive Spanners\footnote{This paper has been accepted in the 19th Symposium on Experimental Algorithms, SEA 2021.}} %TODO Please add
\author{Reyan~Ahmed}{University of Arizona, Tucson, United States}{abureyanahmed@email.arizona.edu}{}{}
\author{Greg~Bodwin}{University of Michigan, Ann Arbor, United States}{bodwin@umich.edu}{}{}
\author{Faryad~Darabi~Sahneh}{University of Arizona, Tucson, United States}{faryad@email.arizona.edu}{}{}
\author{Keaton~Hamm}{University of Texas at Arlington, Arlington, United States}{keaton.hamm@uta.edu}{}{}
\author{Stephen~Kobourov}{University of Arizona, Tucson, United States}{kobourov@cs.arizona.edu}{}{}
\author{Richard~Spence}{University of Arizona, Tucson, United States}{rcspence@email.arizona.edu}{}{}
\authorrunning{R.\ Ahmed, et al.} %TODO mandatory. First: Use abbreviated first/middle names. Second (only in severe cases): Use first author plus 'et al.'
\keywords{multi-level, graph spanner, approximation algorithms} %TODO mandatory; please add comma-separated list of keywords
\newcommand{\OPT}{\text{OPT}}
\newcommand{\Cc}{\mathcal{C}}
\newcommand{\eps}{\varepsilon}
\begin{document}

\maketitle

%TODO mandatory: add short abstract of the document
\begin{abstract}
%{\color{red}
Given a graph $G = (V,E)$, a subgraph $H$ is an \emph{additive $+\beta$ spanner} if $\dist_H(u,v) \le \dist_G(u,v) + \beta$ for all $u, v \in V$. A \emph{pairwise spanner} is a spanner for which the above inequality only must hold for specific pairs $P \subseteq V \times V$ given on input, and when the pairs have the structure $P = S \times S$ for some subset $S \subseteq V$, it is specifically called a \emph{subsetwise spanner}.
%is a special case of the pairwise spanner where $P = S \times S$, for some subset $S \subseteq V$ of important vertices.
Spanners in unweighted graphs have been studied extensively in the literature, but have only recently been generalized to weighted graphs.

In this paper, we consider a multi-level version of the subsetwise spanner in weighted graphs, where the vertices in $S$ possess varying level, priority, or quality of service (QoS) requirements, and the goal is to compute a nested sequence of spanners with the minimum total number of edges. We first generalize the $+2$ subsetwise spanner of [Pettie 2008, Cygan et al., 2013] to the weighted setting.
%\gbnote{Are we keeping this result?}
%\gbnote{Unwtd $+2$ subset spanner first in Pettie, rediscovered in Cygan.  Check refs in rest of paper}
We experimentally measure the performance of this and several other algorithms for weighted additive spanners, both in terms of runtime and sparsity of output spanner, when applied at each level of the multi-level problem.
Spanner sparsity is compared to the sparsest possible spanner satisfying the given error budget, obtained using an integer programming formulation of the problem.
We run our experiments with respect to input graphs generated by several different random graph generators: Erd\H{o}s--R\'{e}nyi, Watts--Strogatz, Barab\'{a}si--Albert, and random geometric models. By analyzing our experimental results we developed a new technique of changing an initialization parameter value that provides better performance in practice.
%}
%\gbnote{Do we want to name the graph generators?}
%, and experimentally compare the performance of these algorithms with respect to the optimum solution on inputs generated by different random graph generators.}

\begin{comment}
We study the multi-level weighted additive spanner problem: a generalization of the additive spanner problem in graphs where terminals $T$ require varying levels, priority, or quality of service and every level has an additive stretch. In this problem, we seek to find a set of nested subgraphs with the minimum number of total edges such that in every level the corresponding subgraph spans the terminals of that level within the given additive stretch. We provide a $+2W$ subsetwise construction of a $n$-node graph having $O(nW\sqrt{|S|})$ edges, where $W$ is the maximum edge weight and $|S|$ is the size of the subset. We provide different approximation algorithms for the multi-level spanner problem for different additive stretches. We also provide an exact algorithm to compute the optimal solution of these problems and provide experimental results on graphs generated by different random graph generators.
\end{comment}
\end{abstract}

\clearpage
\setcounter{page}{1}

%\tableofcontents

\newpage
\section{Introduction}

This paper studies spanners of undirected input graphs with edge weights.
Given an input graph, a \emph{spanner} is a sparse subgraph with approximately the same distance metric as the original graph.
Spanners are used as a primitive for many algorithmic tasks involving the analysis of distances or shortest paths in enormous input graphs; it is often advantageous to first replace the graph with a spanner, which can be analyzed much more quickly and stored in much smaller space, at the price of a small amount of error.
See the recent survey \cite{ahmed2020graphElsevier} for more details on these applications.

Spanners were first studied in the setting of \emph{multiplicative error}, where for an input graph $G$ and an error (``stretch'') parameter $k$, the spanner $H$ must satisfy $\dist_H(s, t) \le k \cdot \dist_G(s, t)$ for all vertices $s, t$.
This setting was quickly resolved in a seminal paper by Alth{\" o}fer, Das, Dobkin, Joseph, and Soares \cite{alth90}, where the authors proved that for all positive integers $k$, all $n$-vertex graphs have spanners on $O(n^{1+1/k})$ edges with stretch $2k-1$, and that this tradeoff is best possible.
Thus, as expected, one can trade off error for spanner sparsity, increasing the stretch $k$ to pay more and more error for sparser and sparser spanners.

For very large graphs, \emph{purely additive error} is arguably a much more appealing paradigm. For a constant $c > 0$, a \emph{$+c$ spanner} of an $n$-vertex graph $G$ is a subgraph $H$ such that $\dist_H(s, t) \le \dist_G(s, t)+c$
for all vertices $s,t$.
Thus, for additive error the excess distance in $H$ is \emph{independent} of the graph size and of $\dist_G(s, t)$, which can be large when $n$ is large.
%, where $\dist_G(u,v)$ denotes the graph distance between $u$ and $v$ in $G$.
Additive spanners were introduced by Liestman and Shermer~\cite{liestmannp},
%studied additive spanners for particular classes of graphs, and show that determining if $G$ has a $+c$ spanner on $k$ or fewer edges is NP-complete.
and followed by three landmark theoretical results on the sparsity of additive spanners in unweighted graphs: Aingworth, Chekuri, Indyk, and Motwani~\cite{Aingworth99fast} showed that all graphs have $+2$ spanners on $O(n^{3/2})$ edges, Chechik~\cite{chechik2013new, bodwin2020note} showed that all graphs have $+4$ spanners on $O(n^{7/5})$ edges, and Baswana, Kavitha, Mehlhorn, and Pettie~\cite{baswana2010additive} showed that all graphs have $+6$ spanners on $O(n^{4/3})$ edges.

Despite the inherent appeal of additive error, spanners with multiplicative error remain much more commonly used in practice.
There are two reasons for this.
\begin{enumerate}
    \item First, while the multiplicative spanner of Alth{\" o}fer et al~\cite{alth90} works without issue for weighted graphs, the previous additive spanner constructions hold only for unweighted graphs, whereas the metrics that arise in applications often require edge weights.
    Addressing this, recent work of the authors \cite{ahmed2020weighted} and in two papers of Elkin, Gitlitz, and Neiman \cite{elkin2019almost, elkin2020improved} gave natural extensions of the classic additive spanner constructions to weighted graphs.
    For example, the $+2$ spanner bound becomes the following statement: for all $n$-vertex weighted graphs $G$, there is a subgraph $H$ satisfying $\dist_H(s, t) \le \dist_G(s, t) + 2W(s, t)$, where $W(s, t)$ denotes the maximum edge weight along an arbitrary $s \leadsto t$ shortest path in $G$.
    The $+4$ spanner generalizes similarly, and the $+6$ spanner does as well with the small exception that the error increases to $+(6+\eps)W(s, t)$, for $\eps > 0$ arbitrarily small which trades off with the implicit constant in the spanner size.

    \item Second, $\text{poly}(n)$ factors in spanner size can be quite serious in large graphs, and so applications often require spanners of near-linear size, say $O(n^{1.01})$ edges for an $n$-vertex input graph.
    The \emph{worst-case} spanner sizes of $O(n^{4/3})$ or greater for additive spanner constructions are thus undesirable, and unfortunately, there is a theoretical barrier to improving them: Abboud and Bodwin \cite{abboud20174} proved that one \emph{cannot} generally trade off more additive error for sparser spanners, as one can in the multiplicative setting.
    Specifically, for any constant $c > 0$, there is no general construction of $+c$ spanners for $n$-node input graphs on $O(n^{4/3 - 0.001})$ edges.
    However, the lower bound construction is rather pathological, and it is not likely to arise in practice.
    It is known that for many practical graph classes, e.g., those with good expansion, near-linear additive spanners always exist \cite{baswana2010additive}.
    Thus, towards applications of additive error, it is currently an important open question whether modern additive spanner constructions on \emph{practical} graphs of interest tend to exhibit performance closer to the worst-case bounds from \cite{abboud20174}, or bounds closer to the best ones available for the given input graphs.
    (We remark here that there are strong computational barriers to designing algorithms that achieve the sparsest possible $+c$ spanners directly, or which even closely approximate this quantity in general \cite{chlamtavc2017approximating}).
\end{enumerate}

The goal of this work is to address the second point, by measuring the experimental performance of the state-of-the-art constructions for weighted additive spanners on graphs generated from various popular random models and measuring their performance.
We consider both $+cW$ spanners (where $W=\max_{uv \in E}w(uv)$ is the maximum edge weight) and $+cW(\cdot, \cdot)$ spanners, whose additive error is $+cW(s,t)$ for each pair $s,t \in V$. We are interested both in runtime and in the ratio of output spanner size to the size of the sparsest possible spanner (which we obtain using an ILP, discussed in Section \ref{sec:exact_algo}).
We specifically consider generalizations of the three staple constructions for weighted additive spanners mentioned above, in which the spanner distance constraint only needs to be satisfied for given pairs of vertices.

In particular, the following extensions are considered.
% All of these spanners can be constructed in polynomial time. On the other hand, there exist graphs for which there is no $+c$ spanner on $O(n^{4/3 - \varepsilon})$ edges for all $\varepsilon > 0$~\cite{abboud20174}.
A \emph{pairwise spanner} is a subgraph that must satisfy the spanner error inequality for a given set of vertex pairs $P$ taken on input, and a \emph{subsetwise spanner} is a pairwise spanner with the additional structure $P = S \times S$ for some vertex subset $S$.
See~\cite{Pettie09, Cygan13, Kavitha15, Kavitha2017, bodwin2017linear, coppersmith2006sparse, Bodwin:2016:BDP:2884435.2884496, Hsien2018nearoptimal} for recent prior work on pairwise and subsetwise spanners.
We also discuss a multi-level version of the subsetwise additive spanner problem where we have an edge-weighted graph $G=(V, E)$, a nested sequence of terminals $S_\ell \subseteq S_{\ell-1} \subseteq \cdots \subseteq S_1 \subseteq V$ and a real number $c \ge 0$ as input. We want to compute a nested sequence of subgraphs $G_\ell \subseteq G_{\ell-1} \subseteq \cdots \subseteq G_1$ such that $G_i$ is a $+cW$ subsetwise spanner of $G$ over $S_i$. The objective is to minimize the total number of edges in all subgraphs. Similar generalizations have been studied for the Steiner tree problem under various names including Multi-level Network Design~\cite{Balakrishnan1994}, Quality of Service Multicast Tree (QoSMT)~\cite{Charikar2004ToN, Karpinski2005}, Priority Steiner Tree~\cite{Chuzhoy2008}, Multi-Tier Tree~\cite{mirchandani1996MTT}, and Multi-level Steiner Tree~\cite{MLST2018, ahmed2020kruskal}. However, multi-level or QoS generalizations of spanner problems appear to have been much less studied in literature. Section~\ref{sec:subsetwise} generalizes the $+2$ subsetwise construction~\cite{Cygan13}, and Section~\ref{sec:multi_level} generalizes to the multi-level setting.

\section{Subsetwise spanners}\label{sec:subsetwise}
All unweighted graphs have polynomially constructible $+2$ subsetwise spanners over $S \subseteq V$ on $O(n \sqrt{|S|})$ edges~\cite{Pettie09, Cygan13}. For weighted graphs, Ahmed et al.~\cite{ahmed2020weighted} recently give a $+4W$ subsetwise spanner construction, also using $O(n \sqrt{|S|})$ edges. In this section we show how to generalize the $+2$ subsetwise construction~\cite{Pettie09, Cygan13} to the weighted setting by giving a construction which produces a subsetwise $+2W$ spanner of a weighted graph (with integer edge weights in $[1,W]$) on $O(nW\sqrt{|S|})$ edges. Due to space, we omit most proof details here but describe the construction instead.

A \emph{clustering} $\Cc = \{C_1, C_2, \ldots, C_q\}$ is a set of disjoint subsets of vertices. Initially, every vertex is unclustered. The subsetwise $+2W$ construction has two steps: the clustering phase and the path buying phase. The clustering phase is exactly the same as that of~\cite{Pettie09, Cygan13} in which we construct a cluster subgraph $G_{\Cc}$ as follows: set $\beta = \log_n \sqrt{|S|W}$, and while there is a vertex $v$ with at least $\lceil{n^\beta}\rceil$ unclustered neighbors, we add a cluster $C$ to $\Cc$ containing exactly $\lceil n^\beta \rceil$ unclustered neighbors of $v$ (note that $v \not\in C$). We add to $G_{\Cc}$ all edges $vx$ ($x \in C$) and $xy$ ($x, y \in C$). When there are no more vertices with at least $\lceil{n^\beta}\rceil$ unclustered neighbors, we add all the unclustered vertices and their incident edges to $G_{\Cc}$.

In the second (path-buying) phase, we start with a clustering $\Cc$ and a cluster subgraph $G_0 := G_{\Cc}$. There are $z := \binom{|S|}{2}$ unordered pairs of vertices in $S$; let $\pi_1$, $\pi_2$, \ldots, $\pi_z$ denote the shortest paths between these vertex pairs where $\pi_i = \pi(u_i, v_i)$ has endpoints $\{u_i, v_i\}$. As in~\cite{Cygan13}, we iterate from $i=1$ to $i=z$ and determine whether to add path $\pi_i$ to the spanner. Define the cost and value of a path $\pi_i$ as follows:

\begin{align*}
\text{cost}(\pi_i) &:= \# \text{ edges of } \pi_i \text{ which are absent in } G_{i-1}\\
\text{value}(\pi_i) &:= \# \text{ pairs } (x,C) \text{ where } x \in \{u_i, v_i\}, C \in \Cc,\\
& \text{$C$ contains at least one vertex in $\pi_i$,} \\
& \text{ and } \dist_{\pi_i}(x,C) < \dist_{G_{i-1}}(x,C)
\end{align*}
If $\text{cost}(\pi_i) \le (2W+1)\text{value}(\pi_i)$, then we add (``buy'') $\pi_i$ to the spanner by letting $G_i = G_{i-1} \cup \pi_i$. Otherwise, we do not add $\pi_i$, and let $G_i = G_{i-1}$. The final spanner returned is $H = G_z$.

\begin{lemma}
For any $u_i, v_i \in  S$, we have $\dist_{H}(u_i, v_i) \leq \dist_G(u_i, v_i) + 2W$.
\end{lemma}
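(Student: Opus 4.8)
The plan is to establish the distance guarantee by case analysis on the structure of the shortest path $\pi_i = \pi(u_i, v_i)$ in $G$. The key dichotomy is whether the path $\pi_i$ was bought (added to the spanner) or not. If $\pi_i$ was bought, then $\pi_i \subseteq H$, so $\dist_H(u_i,v_i) = \dist_G(u_i,v_i)$ and the bound holds trivially with room to spare. The substance of the proof therefore lies entirely in the case where $\pi_i$ was \emph{not} bought, meaning $\text{cost}(\pi_i) > (2W+1)\,\text{value}(\pi_i)$ when $\pi_i$ was processed in iteration $i$.

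In the not-bought case, I would follow the path-buying analysis of Cygan et al.~\cite{Cygan13}, adapting it to account for edge weights. The idea is to decompose the path $\pi_i$ into maximal segments of \emph{missing} edges (those absent from $G_{i-1}$, and hence from $H$ since we only add edges over time) separated by segments already present. For each maximal missing segment, its two endpoints must lie in clusters (any vertex with a missing incident edge on the path cannot be an unclustered vertex, since all edges incident to unclustered vertices are placed in $G_{\Cc} \subseteq H$). I would argue that because the path was not bought, every cluster $C$ meeting $\pi_i$ already provides a short route: for each endpoint $x \in \{u_i, v_i\}$ and each such cluster $C$, the value definition together with the rejection condition forces $\dist_{G_{i-1}}(x, C) \le \dist_{\pi_i}(x,C)$, i.e.\ the spanner already reaches into every cluster on the path at least as cheaply as the path does.

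The reconstruction of a short $u_i \leadsto v_i$ walk in $H$ then proceeds by stitching together the present segments of $\pi_i$ with detours through the clusters: whenever the path would traverse a missing segment between clusters $C$ and $C'$, I replace it by routing into $C$ using the guaranteed short spanner distance, crossing within the cluster (all intra-cluster and center edges lie in $G_{\Cc} \subseteq H$, so crossing a cluster costs at most its diameter), and exiting toward $C'$. The additive overhead incurred by entering and leaving a single cluster is bounded by twice the maximum edge weight, i.e.\ $+2W$, since a cluster center is one weighted edge from each of its members. The crux is to verify that this detour needs to be taken through only \emph{one} cluster, or that the overheads from multiple clusters telescope and do not accumulate beyond $+2W$; this is exactly where the $(2W+1)$ coefficient in the cost/value threshold is calibrated, ensuring the rejection inequality yields the clean $+2W$ additive bound rather than an error scaling with the number of clusters crossed.

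The main obstacle I anticipate is precisely this accounting step: in the unweighted original, each cluster contributes exactly $+1$ per endpoint and the bookkeeping is clean, but with weights the intra-cluster and center-edge traversals cost up to $W$ apiece, so I must argue carefully that a single cluster detour suffices and that the weighted value threshold $(2W+1)$ is tuned so that a not-bought path is \emph{guaranteed} to have its endpoints' shortest distances into every relevant cluster already realized in $G_{i-1}$. Handling the degenerate cases where $u_i$ or $v_i$ is itself unclustered, or where the path meets a cluster only at an interior vertex, would require separate but routine checking; I would fold these into the segment-decomposition argument to keep the bound uniform.
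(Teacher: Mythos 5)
The paper itself gives only a two-sentence sketch deferring to Cygan et al., whose sole stated modification is quantitative: since a cluster has weighted diameter at most $2W$ in $G_{\Cc}$ and every edge has weight at least $1$, a shortest path can contain at most $O(W)$ vertices of any single cluster, so $t$ missing edges force $\Omega(t/W)$ distinct clusters to meet the path. Your bought/not-bought dichotomy and the idea that a single cluster detour costs $+2W$ are the right skeleton, but your central claim in the not-bought case is false as stated: the rejection condition $\text{cost}(\pi_i) > (2W+1)\,\text{value}(\pi_i)$ does \emph{not} force $\dist_{G_{i-1}}(x,C) \le \dist_{\pi_i}(x,C)$ for \emph{every} cluster $C$ meeting $\pi_i$ and both endpoints $x$; a rejected path can perfectly well have $\text{value}(\pi_i) > 0$, i.e., some clusters for which the path is strictly better than the current spanner.

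The step you are missing is a pigeonhole count, and it is exactly where the $(2W+1)$ threshold and the weighted cluster bound are used. Let $t = \text{cost}(\pi_i)$. Rejection gives $\text{value}(\pi_i) < t/(2W+1)$, while the cluster-diameter argument above gives that at least $t/(2W+1)$ distinct clusters contain a vertex of $\pi_i$ (every endpoint of a missing edge is clustered, and no cluster absorbs more than $2W+1$ path vertices). Hence the number of clusters meeting $\pi_i$ strictly exceeds $\text{value}(\pi_i)$, so at least one such cluster $C^*$ contributes zero to the value: \emph{both} $\dist_{G_{i-1}}(u_i,C^*) \le \dist_{\pi_i}(u_i,C^*)$ and $\dist_{G_{i-1}}(v_i,C^*) \le \dist_{\pi_i}(v_i,C^*)$. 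Routing $u_i \leadsto C^*$ in $G_{i-1}$, crossing $C^*$ inside $G_{\Cc}$ at cost at most $2W$, and then $C^* \leadsto v_i$ in $G_{i-1}$ yields $\dist_H(u_i,v_i) \le \dist_G(u_i,v_i) + 2W$ with a \emph{single} detour; the multi-cluster stitching and telescoping you flag as ``the crux'' never arise. Since your proposal both asserts an incorrect universal version of the distance condition and explicitly leaves this accounting unresolved, it has a genuine gap at the heart of the argument.
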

\begin{proof}[Proof sketch.]
The proof is largely the same as in~\cite{Cygan13} except with one main difference: in the unweighted case, the distance between any two points within the same cluster is at most 2, and it is shown in~\cite{Cygan13} that using this property, if there are $t$ edges in $\pi(u,v)$ missing from $G_{\Cc}$, then there are at least $\frac{t}{2}$ clusters containing at least one vertex on $\pi(u,v)$. In the weighted case, assuming $W$ is constant, there are $\Omega(t)$ clusters of $\Cc$ which contain at least one vertex on $\pi(u,v)$.
\end{proof}

\begin{corollary}
Let $G$ be a weighted graph with integer edge weights in $[1,W]$. Then $G$ has a $+6W$ pairwise spanner on $O(Wn|P|^{1/4})$ edges.
\end{corollary}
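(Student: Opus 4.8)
The plan is to reduce the pairwise problem to the $+2W$ subsetwise construction of the previous Lemma by introducing a small hub set whose size is controlled by $|P|$, and then routing each demand pair through two short hops into and out of this hub set. First I would run a clustering phase as before, but with the cluster-size threshold chosen as a function of $|P|$ rather than $|S|$: form clusters of size $d := n/\sqrt{|P|}$, so that there are at most $|R| \le n/d = \sqrt{|P|}$ cluster centers, and add the star edges joining each clustered vertex to its center (so every clustered vertex lies within one edge, hence weight at most $W$, of its center). I would then invoke the Lemma to build a $+2W$ subsetwise spanner over the set $R$ of cluster centers; by the stated size bound this contributes $O(nW\sqrt{|R|}) = O(nW|P|^{1/4})$ edges, which already matches the target.

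The second step is the routing and error analysis. For a pair $(s,t) \in P$ with both endpoints clustered, I would route $s \to r_s \to r_t \to t$, where $r_s, r_t$ are the respective cluster centers. The two star hops cost at most $W$ each, and since $\dist_G(r_s, r_t) \le \dist_G(s,t) + 2W$ by the triangle inequality, the subsetwise guarantee $\dist_H(r_s, r_t) \le \dist_G(r_s,r_t) + 2W$ yields
\[
\dist_H(s,t) \le W + \big(\dist_G(s,t) + 2W + 2W\big) + W = \dist_G(s,t) + 6W,
\]
which is exactly the claimed stretch and explains the budget as $+6W = +2W + 2W + 2W$.

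The main obstacle is handling demand endpoints that remain unclustered, since these are not within one edge of any center and so cannot be charged to the hub structure directly. Here I would exploit the fact that, once clustering halts, every vertex has fewer than $d$ unclustered neighbors, so the number of edges incident to unclustered vertices is below $nd$; the difficulty is that naively adding all of them costs $O(nd) = O(n^2/\sqrt{|P|})$, which exceeds the budget for small $|P|$. I expect the resolution to mirror the path-buying argument of the Lemma: rather than adding all unclustered edges, one buys only shortest-path segments connecting each unclustered terminal to the nearest clustered vertex, bounding the number of bought edges through a cost-versus-value charging scheme and absorbing the extra detour into two further $+W$ terms. Verifying that this charging keeps the total at $O(nW|P|^{1/4})$ while preserving the $+6W$ stretch for \emph{every} pair, including those with one or two unclustered endpoints, is the crux and is where I would spend most of the effort; the clustered--clustered case above is the easy core of the argument.
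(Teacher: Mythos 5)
Your clustered--clustered case is sound, and its error accounting ($+6W = W + 2W + 2W + W$, using $\dist_G(r_s,r_t)\le\dist_G(s,t)+2W$ and the $+2W$ subsetwise guarantee) matches the accounting the paper's construction ultimately relies on. But the way you obtain the hub set is where the argument breaks, and the part you flag as ``the crux'' is not a technicality that a charging scheme can absorb --- it is the whole difficulty, and your framework cannot resolve it. With cluster threshold $d = n/\sqrt{|P|}$, clustering only occurs at vertices with at least $n/\sqrt{|P|}$ unclustered neighbors; when $|P|$ is small or the graph has modest maximum degree, \emph{no} vertex is ever clustered, so $R=\emptyset$ and every demand pair falls into the unclustered case. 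Your proposed fallback --- buying a shortest-path segment from an unclustered terminal to the nearest clustered vertex --- then has nothing to connect to, and even when clustered vertices do exist, the nearest one can be at graph distance far exceeding $W$ from the terminal, so the detour is not chargeable to the additive budget; the $+6W$ guarantee is lost, not merely the edge count.

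The paper avoids this by not clustering at the pairwise level at all: its proof is to rerun the $+8W$ pairwise construction (Algorithm~\ref{alg:8W-pairwise}) verbatim, substituting the $+2W$ subsetwise spanner of Section~\ref{sec:subsetwise} for the $+4W$ subsetwise subroutine. There the hub set $R$ is a random sample at rate $1/(\ell d)$ taken over a $d$-light initialization with $d=|P|^{1/4}$ and $\ell = n/|P|^{3/4}$: any edge still missing after the initialization has endpoints with at least $d$ lighter incident edges, so among the first $\ell$ missing edges of a long path some endpoint has a sampled neighbor at distance at most $W$ --- exactly the ``$s\to u\to r$'' hop your analysis needs, but guaranteed near the \emph{path} rather than near the terminals. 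Pairs whose shortest paths miss at most $\ell$ edges are bought outright for $O(|P|\ell)=O(n|P|^{1/4})$ edges, which is the mechanism your scheme lacks for small $|P|$. The sample has expected size $n/(\ell d)=\sqrt{|P|}$, so the $+2W$ subsetwise spanner over it costs $O(nW|P|^{1/4})$ edges, and replacing the $+4W$ subroutine by the $+2W$ one is what drops the error from $+8W$ to $+6W$. To repair your write-up, replace the degree-based clustering with this light-initialization-plus-sampling step and add the short-path buying branch; your error analysis then goes through with $r_s, r_t$ reinterpreted as sampled vertices adjacent to the path rather than cluster centers of $s$ and $t$.
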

This follows from applying the $+8W$ construction of Ahmed et al.~\cite{ahmed2020weighted} (Appendix~\ref{section:pairwise}, Algorithm~\ref{alg:8W-pairwise}), except we use the above $+2W$ subsetwise spanner instead of the $+4W$ subsetwise spanner construction given in~\cite{ahmed2020weighted} as a subroutine. 

\section{Pairwise spanner constructions~\cite{ahmed2020weighted}} \label{section:pairwise}
Here, we provide pseudocode (Algorithms~\ref{alg:2W-pairwise}--\ref{alg:8W-pairwise}) describing the $+2W$, $+4W$, and $+8W$ pairwise spanner constructions\footnote{Using a tighter analysis or the above $+2W$ subsetwise construction in place of the $+4W$ construction in Algorithm~\ref{alg:8W-pairwise}, the additive error can be improved to $+2W(\cdot, \cdot)$, $+4W(\cdot, \cdot)$, and $+6W$ for integer edge weights.} by Ahmed et al.~\cite{ahmed2020weighted}. These spanner constructions have a similar theme: first, construct a $d$-\emph{light initialization}, which is a subgraph $H$ obtained by adding the $d$ lightest edges incident to each vertex (or all edges if the degree is at most $d$). Then for each pair $(s,t) \in P$, consider the number of edges in $\pi(s,t)$ which are absent in the current subgraph $H$. Add $\pi(s,t)$ to $H$ if the number of missing edges is at most some threshold $\ell$, or otherwise randomly sample vertices and either add a shortest path tree rooted at these vertices, or construct a subsetwise spanner among them.

\begin{algorithm}[h!]

\caption{$+2W$ pairwise spanner~\cite{ahmed2020weighted}}\label{alg:2W-pairwise}
\begin{algorithmic}[1]
\State $d = |P|^{1/3}$, $\ell = n/|P|^{2/3}$
\State $H = d$-light initialization
\State let $m'$ be the number of missing edges needed for a valid construction
\While{$m'>nd$}
\For{$(s,t) \in P$}
\State $x = |E(\pi(s,t)) \setminus E(H)|$
\If{$x \le \ell$}
\State add $\pi(s,t)$ to $H$
\EndIf
\EndFor
\State $R = $ random sample of vertices, each with probability $1/(\ell d)$
\For{$r \in R$}
\State add a shortest path tree rooted at $r$ to each vertex
\EndFor
\EndWhile
\State add the $m'$ missing edges
\State \Return{$H$}
\end{algorithmic}
\end{algorithm}

\begin{algorithm}[h!]

\caption{$+4W$ pairwise spanner~\cite{ahmed2020weighted}}\label{alg:4W-pairwise}
\begin{algorithmic}[1]
\State $d = |P|^{2/7}$, $\ell = n/|P|^{5/7}$
\State $H = d$-light initialization
\State let $m'$ be the number of missing edges needed for a valid construction
\While{$m'>nd$}
\For{$(s,t) \in P$}
\State $x = |E(\pi(s,t)) \setminus E(H)|$
\If{$x \le \ell$}
\State add $\pi(s,t)$ to $H$
\ElsIf{$x \ge n/d^2$}
\State $R_1$ = random sample of vertices, each w.p. $d^2/n$
\State add a shortest path tree rooted at each $r \in R_1$
\Else
\State add first $\ell$ and last $\ell$ missing edges of $\pi(s,t)$ to $H$
\State $R_2$ = i.i.d. sample of vertices, w.p. $1/(\ell d)$
\For{each $r, r' \in R_2$}
\If{exists $r \to r'$ path missing $\le n/d^2$ edges}
\State add to $H$ a shortest $r \to r'$ path among paths missing $\le n/d^2$ edges
\EndIf
\EndFor
\EndIf
\EndFor
\EndWhile
\State add the $m'$ missing edges
\State \Return{$H$}
\end{algorithmic}
\end{algorithm}

\begin{algorithm}[h!]

\caption{$+8W$ pairwise spanner~\cite{ahmed2020weighted}}\label{alg:8W-pairwise}
\begin{algorithmic}[1]
\State $d = |P|^{1/4}$, $\ell = n/|P|^{3/4}$
\State $H = d$-light initialization
\State let $m'$ be the number of missing edges needed for a valid construction
\While{$m'>nd$}
\For{$(s,t) \in P$}
\State $x = |E(\pi(s,t)) \setminus E(H)|$
\If{$x \le \ell$}
\State add $\pi(s,t)$ to $H$
\Else
\State add first $\ell$ and last $\ell$ missing edges of $\pi(s,t)$ to $H$
\State $R$ = random sample of vertices, each w.p. $1/(\ell d)$
\State $H' = +4W$ subsetwise $(R \times R)$-spanner~\cite{ahmed2020weighted}
\State add $H'$ to $H$
\EndIf
\EndFor
\EndWhile
\State add the $m'$ missing edges
\State \Return{$H$}
\end{algorithmic}
\end{algorithm}

\section{Multi-level spanners}\label{sec:multi_level}
Here we study a multi-level variant of graph spanners. We first define the problem:

\begin{definition}[Multi-level weighted additive spanner] \label{def:mlas}
Given a weighted graph $G(V,E)$ with maximum weight $W$, a nested sequence of subsets of vertices $S_{\ell} \subseteq S_{\ell-1} \subseteq \ldots \subseteq S_1 \subseteq V$, and $c \ge 0$, a multi-level additive spanner is a nested sequence of subgraphs $G_{\ell} \subseteq G_{\ell-1} \subseteq \ldots \subseteq G_1 \subseteq G$, where $G_i$ is a subsetwise $+cW$ spanner over $S_i$.
\end{definition}

Observe that Definition~\ref{def:mlas} generalizes the subsetwise spanner, which is a special case where $\ell=1$. We measure the size of a multi-level spanner by its \emph{sparsity}, defined by $\text{sparsity}(\{G_i\}_{i=1}^{\ell}) := \sum_{i=1}^{\ell} |E(G_i)|.$

The problem can equivalently be phrased in terms of priorities and rates: each vertex $v \in S_1$ has a priority $P(v)$ between 1 and $\ell$ (namely, $P(v) = \max \{i: v \in S_i\}$), and we wish to compute a single subgraph containing edges of different rates such that for all $u, v \in S_1$, there is a $+cW$ spanner path consisting of edges of rate at least $\min\{P(u), P(v)\}$. With this, we will typically refer to the \emph{priority} of $v$ to denote the highest $i$ such that $v \in S_i$, or 0 if $v \not\in S_1$.
%{\color{red}Both formulations are equivalent; however, the terminology of the two models are slightly different.}
In this section, we show that the multi-level version is not significantly harder than the ordinary ``single-level'' version: a subroutine which can compute an additive spanner can be used to compute a multi-level spanner whose sparsity is comparably good. Let $\OPT$ denote the minimum sparsity over all candidate multi-level additive spanners.

We first describe a simple rounding-up approach based on an algorithm by Charikar et al.~\cite{Charikar2004ToN} for the QoSMT problem, a similar generalization of the Steiner tree problem. For this approach, assume we have a subroutine which computes an exact or approximate single-level subsetwise spanner. Given $v \in S_1$, let $P(v) \in [1, \ell]$ denote the priority of $v$. The rounding-up approach is as follows: for each $v$, round $P(v)$ up to the nearest power of 2. This effectively constructs a ``rounded-up'' instance where all vertices in $S_1$ have priority either 1, 2, 4, \ldots, $2^{\lceil \log_2 \ell \rceil}$. The sparsity of the optimum solution in the rounded-up instance is at most $2\OPT$; given the optimum solution to the original instance with sparsity $\OPT$, a feasible solution to the rounded-up instance with sparsity at most $2\OPT$ can be obtained by rounding up the rate of each edge to the nearest power of 2.

For each $i \in \{1,2,4,\ldots,2^{\lceil \log_2 \ell \rceil}\}$, use the subroutine to compute a level-$i$ subsetwise spanner over all vertices whose rounded-up priority is at least $i$. The final multi-level additive spanner is obtained by taking the union of these computed spanners, by keeping an edge at the highest level it appears in. %Figure~\ref{fig:copy_to_lower_levels}).

\begin{theorem}
\label{thrm:round_up_opt}
Assuming an exact subsetwise spanner subroutine, the solution computed by the rounding-up approach has sparsity at most $4 \cdot \OPT$.
\end{theorem}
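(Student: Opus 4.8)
The plan is to split the factor of $4$ into two independent factors of $2$. The discussion preceding the statement already shows that the rounded-up instance has optimum $\OPT' \le 2\,\OPT$, so it suffices to prove that the approach returns a solution of sparsity at most $2\,\OPT'$. Let $\mathcal{I} := \{1,2,4,\ldots,2^{\lceil \log_2 \ell\rceil}\}$ be the set of power-of-two levels. For $i \in \mathcal{I}$, let $H_i$ be the exact minimum $+cW$ subsetwise spanner produced by the subroutine over $T_i := \{v : P'(v) \ge i\} = S_i'$, where $P'(v)$ denotes the priority of $v$ rounded up to the nearest power of two; write $m_i := |E(H_i)|$. The ``keep each edge at its highest level'' rule means the output is the nested family $G_j := \bigcup_{i \in \mathcal{I},\, i \ge j} H_i$ for $j = 1,\ldots,\ell$.

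First I would verify feasibility, so that the output is a legitimate multi-level spanner whose sparsity we may then bound. For $u,v \in S_j$ we have $P'(u),P'(v) \ge P(u),P(v) \ge j$; since both rounded priorities lie in $\mathcal{I}$, they are at least $i^\star$, the smallest element of $\mathcal{I}$ with $i^\star \ge j$, so $u,v \in T_{i^\star}$ and $H_{i^\star}$ contains a $+cW$ path between them. Every edge of $H_{i^\star}$ belongs to $G_j$ because $i^\star \ge j$, and nestedness $G_{j+1}\subseteq G_j$ is immediate from the union definition.

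Next I would bound the sparsity. By a union bound on edge sets and exchanging the order of summation,
\[
\sum_{j=1}^{\ell} |E(G_j)| \;\le\; \sum_{j=1}^{\ell} \sum_{\substack{i \in \mathcal{I}\\ i \ge j}} m_i \;=\; \sum_{i \in \mathcal{I}} m_i \cdot \min(i,\ell) \;\le\; \sum_{i \in \mathcal{I}} i\, m_i .
\]
Because the subroutine is exact, $m_i \le |E(F_i)|$, where $F_i = \{e : \rho(e) \ge i\}$ is the level-$i$ subgraph of a fixed optimal rounded-up solution and $\rho(e)$ is the rate (highest level) of edge $e$ in that solution: indeed $F_i$ is itself a feasible $+cW$ subsetwise spanner over $S_i' = T_i$, while $H_i$ is a minimum such spanner. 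Using $\OPT' = \sum_e \rho(e)$ and exchanging sums once more,
\[
\sum_{i \in \mathcal{I}} i\, m_i \;\le\; \sum_{i \in \mathcal{I}} i\,|E(F_i)| \;=\; \sum_{e}\; \sum_{\substack{i \in \mathcal{I}\\ i \le \rho(e)}} i \;\le\; \sum_e 2\,\rho(e) \;=\; 2\,\OPT',
\]
where the last inequality uses that the powers of two not exceeding any $x \ge 1$ sum to less than $2x$. Chaining the two displays yields output sparsity $\le 2\,\OPT' \le 4\,\OPT$.

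I expect the only delicate point to be the exactness comparison $m_i \le |E(F_i)|$: one must check that the per-level subgraphs of an optimal rounded-up solution really are feasible single-level $+cW$ subsetwise spanners over exactly the vertex sets $T_i = S_i'$ that the subroutine optimizes. The geometric-sum step $\sum_{i \in \mathcal{I},\, i \le x} i < 2x$ is what produces the second factor of $2$, and it is precisely why rounding priorities to powers of two (rather than to some other base) is the right choice; the rest is bookkeeping over the nested level structure.
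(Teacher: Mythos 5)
Your proof is correct and follows essentially the same route the paper intends: the paper gives no written proof beyond deferring to the $4\rho$-approximation argument of Charikar et al.\ for QoSMT, and your two-factor decomposition (rounding priorities to powers of two costs a factor $2$, and the geometric sum $\sum_{i\in\mathcal{I},\,i\le x} i < 2x$ over the per-level exact spanners costs another factor $2$) is precisely that argument transplanted to subsetwise spanners. The one point you flag as delicate --- that the level-$i$ slice $F_i$ of an optimal rounded-up solution is itself a feasible subsetwise spanner over $T_i=S_i'$, so exactness gives $m_i\le|E(F_i)|$ --- does go through, since by definition that slice must contain a $+cW$ path for every pair in $S_i'$.
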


%\begin{figure}[H]
\begin{figure}
\captionsetup[subfigure]{justification=centering}
\minipage{0.48\textwidth}
  \includegraphics[width=\linewidth]{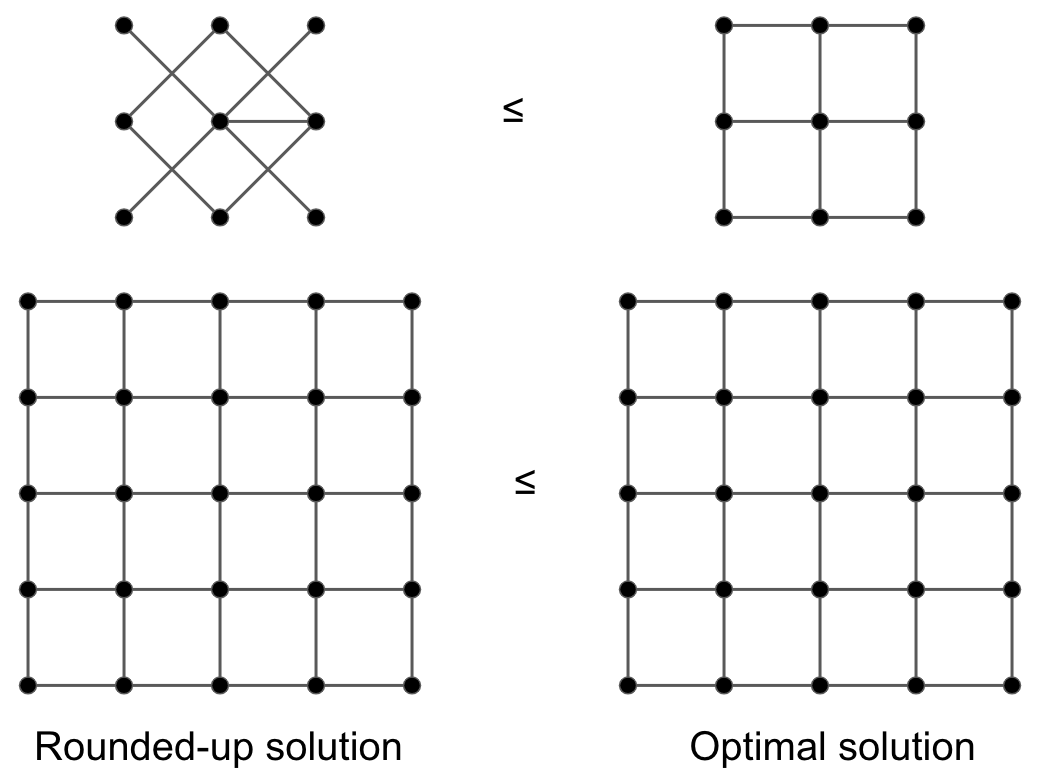}
  \subcaption{}
\label{figure:independent_computations}
\endminipage\hfill
\minipage{0.48\textwidth}
  \includegraphics[width=\linewidth]{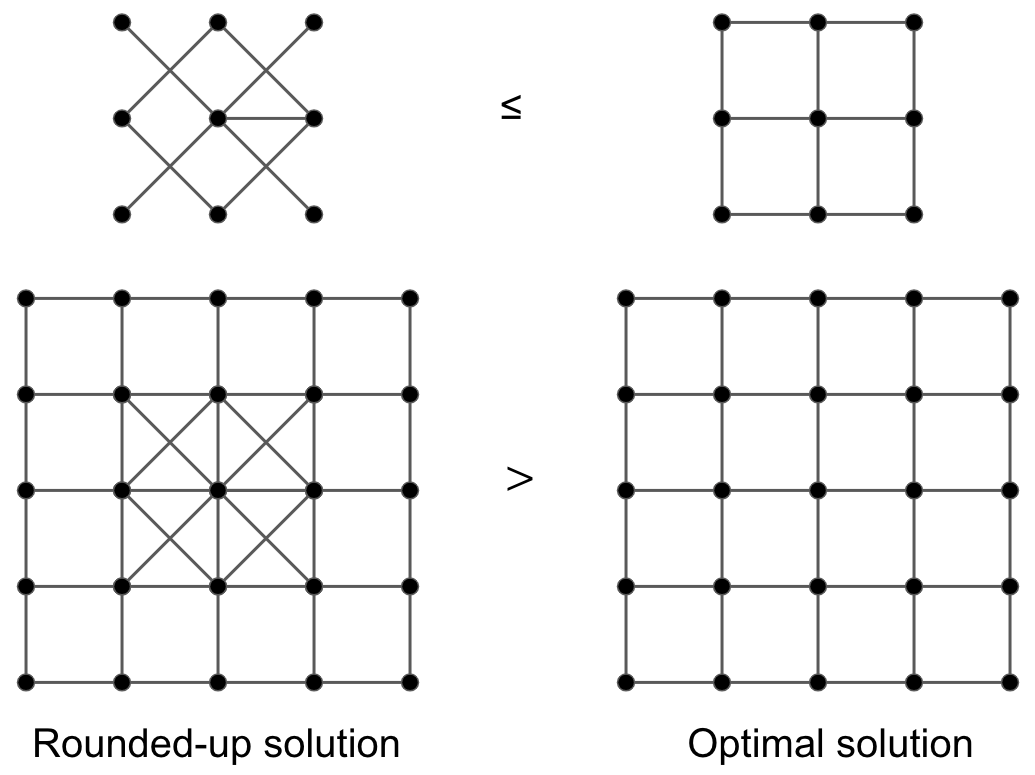}
  \subcaption{}
\label{figure:reuse_edge}
\endminipage
\caption{(a) The rounding-up approach computes an optimal spanner at each level (assuming an exact subroutine), so the sizes of the spanners on each level are at most that of the optimal solution ($9+40$ edges vs. $12+40$). (b) However, when an edge is present in a top-level solution, it must be present in lower-level solutions. The rounding-up approach takes the union of the spanners in the bottom level; in this case, the sparsity of the rounded-up solution ($9+48$ vs. $12+40$) is greater than that of the optimum.}
\end{figure}

\begin{comment}
\begin{figure}
  \centering
    \includegraphics[width=0.8\textwidth]{figures/independent_solution.png}
  \caption{We are comparing level-wise solutions between independently computed optimal solutions and the optimal multi-level solution. We can see that at every level the number of edges in the independently computed solution is less than or equal to the other solution.}\label{fig:independent_computations}
\end{figure}

\begin{figure}
  \centering
    \includegraphics[width=0.8\textwidth]{figures/combined_solution.png}
  \caption{However, when an edge is present in a top-level solution it has to be present in all the bottom level solutions too. The algorithm that computes the solutions independently will not keep that in mind. In that case, there will be relatively fewer number of edges in the optimal multi-level solution.}\label{fig:reuse_edge}
\end{figure}

\begin{figure}
  \centering
    \includegraphics[width=0.8\textwidth]{figures/copied_solution.png}
  \caption{After adding all edges to bottom levels the solution we get from independent computation is no more than twice the optimal multi-level solution. The reason is we only have levels in the format of $2^i$; hence, if we copy the edges to the bottom levels, we only have a factor of $1/2^{i} + 1/2^{i-1} + \cdots + 1/2^0 \leq 2$.}\label{fig:copy_to_lower_levels}
\end{figure}
\end{comment}

This is proved using the same ideas as the $4\rho$-approximation for QoSMT~\cite{Charikar2004ToN}. As mentioned earlier, in practice we use an approximation algorithm to compute the subsetwise spanner instead of computing the minimum spanner.

\begin{theorem}
\label{thrm:round_up_apprx}
There exists a $\tilde{O}(n/\sqrt{|S_1|})$-approximation algorithm to compute multi-level weighted additive spanners with additive stretch $2W$ when $W = O(\log n)$.
\end{theorem}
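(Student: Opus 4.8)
The plan is to reduce the multi-level problem to repeated calls of a single-level subsetwise routine via the rounding-up framework already used for \Cref{thrm:round_up_opt}, and to charge the total cost against $\OPT$ by combining (i) the worst-case sparsity guarantee of the $+2W$ construction of \Cref{sec:subsetwise} with (ii) a trivial per-level lower bound on $\OPT$. First I would pin down the single-level approximation ratio of the $+2W$ routine: on a terminal set $S$ it outputs a spanner with $O(nW\sqrt{|S|})$ edges, which is $\tilde{O}(n\sqrt{|S|})$ once $W = O(\log n)$ is absorbed into the polylogarithmic factor. On the other hand, any feasible subsetwise spanner must keep all terminals of $S$ (that are connected in $G$) in a common component, hence contains at least $|S|-1$ edges, so $\OPT = \Omega(|S|)$ for the single-level instance. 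Dividing, the $+2W$ routine is an $\tilde{O}(n/\sqrt{|S|})$-approximation for the single-level subsetwise problem.

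Next I would run the rounding-up algorithm described before \Cref{thrm:round_up_opt}: round every priority up to the nearest power of two, and for each level $r \in \{1,2,4,\dots,2^{\lceil \log_2 \ell\rceil}\}$ build a $+2W$ subsetwise spanner $H_r$ over the set $S^{(r)}$ of terminals with rounded priority at least $r$, keeping each edge at the highest level in which it was bought. Writing the sparsity as $\sum_e \mathrm{rate}(e)$, where $\mathrm{rate}(e)$ is the top level containing $e$, and using $\mathrm{rate}(e)\le \sum_{r:\,e\in H_r} r$, the output sparsity is at most $\sum_r r\,|E(H_r)| = \tilde{O}(n)\sum_r r\sqrt{|S^{(r)}|}$. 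The argument underlying \Cref{thrm:round_up_opt} shows that replacing the exact subroutine by a $\rho$-approximate one only scales its guarantee by $\rho$; I would therefore re-run that QoSMT-style analysis with the approximate routine in place of the exact one, so that the loss over the rounding step stays a constant factor and the overall bound is driven by the single-level ratio of the previous paragraph.

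To finish I would lower-bound $\OPT$ against the same quantity. Summing the per-level bound $|E(G_i)|\ge |S_i|-1$ over all $\ell$ levels, and using that in the rounded instance the terminal set is constant on each dyadic block $i\in(r/2,r]$, gives $\OPT = \Omega\!\left(\sum_i |S_i|\right) = \Omega\!\left(\sum_r r\,|S^{(r)}|\right)$ (up to the constant lost when passing between the rounded and original instances, exactly as in \Cref{thrm:round_up_opt}). Comparing with the sparsity bound $\tilde{O}(n)\sum_r r\sqrt{|S^{(r)}|}$ reduces the whole theorem to controlling $\big(\sum_r r\sqrt{|S^{(r)}|}\big)/\big(\sum_r r\,|S^{(r)}|\big)$, which is a weighted average of $1/\sqrt{|S^{(r)}|}$ and is therefore bounded by $1/\sqrt{|S_1|}$ when the terminal counts across levels are comparable; multiplying by the $\tilde{O}(n)$ factor yields the claimed $\tilde{O}(n/\sqrt{|S_1|})$ ratio.

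The step I expect to be the main obstacle is precisely this last one: because $|S^{(r)}| \le |S_1|$, the single-level ratio $\tilde{O}(n/\sqrt{|S^{(r)}|})$ is \emph{worst} at the sparse top levels, so the weighted average is in general governed by the smallest terminal set rather than by $|S_1|$. Making the headline bound $\tilde{O}(n/\sqrt{|S_1|})$ fully rigorous thus requires either charging the expensive high-level spanners against the correspondingly small per-level contribution to $\OPT$ (so that the dyadic weights $r\,|S^{(r)}|$ suppress the large $1/\sqrt{|S^{(r)}|}$ factors), or restating the guarantee in terms of the smallest active level; reconciling the convenient single-parameter statement with the genuinely level-dependent guarantee is where the real care is needed.
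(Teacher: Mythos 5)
Your route is the same as the paper's: the paper's entire proof of this theorem is the observation that the $+2W$ subsetwise construction of Section~\ref{sec:subsetwise} outputs $O(nW\sqrt{|S|})$ edges while any feasible subsetwise spanner has at least $|S|-1$ edges, giving a single-level ratio of $O(nW/\sqrt{|S|}) = \tilde{O}(n/\sqrt{|S|})$ once $W = O(\log n)$, with the multi-level claim left implicitly to the rounding-up framework of Theorem~\ref{thrm:round_up_opt}. Your first two paragraphs reproduce exactly this argument, in considerably more detail than the paper supplies.

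The obstacle you flag in your final paragraph is genuine, and the paper's proof does not address it. Running the QoSMT-style charging with a level-dependent ratio $\rho_r = \tilde{O}(n/\sqrt{|S^{(r)}|})$ yields $O(\max_r \rho_r) = \tilde{O}(n/\sqrt{|S_\ell|})$, governed by the \emph{smallest} terminal set, and your weighted average $\bigl(\sum_r r\sqrt{|S^{(r)}|}\bigr)/\bigl(\sum_r r\,|S^{(r)}|\bigr)$ is bounded by $\max_r 1/\sqrt{|S^{(r)}|} = 1/\sqrt{|S_\ell|}$, not by $1/\sqrt{|S_1|}$; this cannot be improved in general. Concretely, take a bounded-degree expander with $|S_1| = n/2$ and $S_2 = \cdots = S_\ell$ of constant size: for constant $|S^{(r)}|$ and $W$ the clustering phase retains essentially all of $G$, so the subroutine returns $\Theta(n)$ edges at every level $r \ge 2$ and the output has sparsity $\Theta(n\ell)$, whereas $\OPT = O(n + \ell\log n)$ because the upper levels need only a union of shortest paths; for $\ell \gg \sqrt{|S_1|}$ the ratio exceeds $\tilde{O}(n/\sqrt{|S_1|})$. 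The statement therefore needs either to be weakened to $\tilde{O}(n/\sqrt{|S_\ell|})$ (equivalently, $\tilde{O}(n/\sqrt{\min_i |S_i|})$), or to be read under the geometric-decay hypothesis $|S_i| \le \alpha|S_{i-1}|$ of the theorem that follows it, under which the total output is $\tilde{O}(n\sqrt{|S_1|})$ and the single lower bound $\OPT \ge |S_1|-1$ gives the stated ratio directly. Your proposed fix of charging the expensive high-level spanners against the per-level contribution to $\OPT$ fails for exactly the reason the example exhibits: the subroutine's output at a level is an absolute bound $O(nW\sqrt{|S^{(r)}|})$ that is not controlled by that level's contribution $|S^{(r)}|-1$ to $\OPT$.
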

This follows from using the $+2W$ subsetwise construction in Section 2. The approximation ratio of this subsetwise spanner algorithm is $O(nW/\sqrt{|S|})$ as the construction produces a spanner of size $O(nW\sqrt{|S|})$, while the sparsest additive spanner trivially has at least $|S|-1 = \Omega(|S|)$ edges.

We now show that, under certain conditions, if we have a subroutine which computes a subsetwise spanner of $G$, $S$ of size $O(n^a |S|^b)$ where $a$ and $b$ are absolute constants, a very na\"{i}ve algorithm can be used to obtain a multi-level spanner also with sparsity $O(n^a |S_1|^b)$.

\begin{theorem}
Suppose there is an absolute constant $0 < \alpha < 1$ such that $|S_i| \le \alpha |S_{i-1}|$ for all $i \in \{1,\ldots,\ell\}$. Then we can compute a multi-level spanner with sparsity $O(n^a |S_1|^b)$.
\end{theorem}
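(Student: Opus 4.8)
The plan is to run the given single-level subroutine independently on each level and then stitch the results together into a nested family, relying on the geometric shrinkage $|S_i| \le \alpha|S_{i-1}|$ to keep the total cost under control. Concretely, for each $i \in \{1,\ldots,\ell\}$ I would first invoke the subroutine to compute a single-level subsetwise $+cW$ spanner $H_i$ of $G$ over $S_i$, so that $|E(H_i)| \le C n^a |S_i|^b$ for some absolute constant $C$. These $H_i$ need not be nested, so I define the output by taking suffix unions from the top level downward: $G_i := \bigcup_{j=i}^{\ell} H_j$.

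Next I would verify feasibility. The nesting $G_\ell \subseteq G_{\ell-1} \subseteq \cdots \subseteq G_1$ is immediate from the definition, since $G_{i+1} = \bigcup_{j \ge i+1} H_j \subseteq \bigcup_{j \ge i} H_j = G_i$. Moreover each $G_i$ is a valid $+cW$ subsetwise spanner over $S_i$: because $H_i \subseteq G_i$ and adding edges cannot increase distances, for all $u,v \in S_i$ we have $\dist_{G_i}(u,v) \le \dist_{H_i}(u,v) \le \dist_G(u,v) + cW$.

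The crux is the sparsity bound, and this is where the shrinkage hypothesis does the work. Iterating $|S_i| \le \alpha|S_{i-1}|$ gives $|S_j| \le \alpha^{j-1}|S_1|$, hence $|E(H_j)| \le C n^a \alpha^{b(j-1)} |S_1|^b$. An edge of $H_j$ can appear only in $G_1, \ldots, G_j$, so it contributes at most $j$ copies to the sparsity; bounding each suffix union by the sum of its parts yields
\begin{align*}
\sum_{i=1}^{\ell} |E(G_i)| \;\le\; \sum_{i=1}^{\ell} \sum_{j=i}^{\ell} |E(H_j)| \;=\; \sum_{j=1}^{\ell} j\,|E(H_j)| \;\le\; C n^a |S_1|^b \sum_{j=1}^{\ell} j\, \alpha^{b(j-1)}.
\end{align*}

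The main obstacle, and the reason the naive ``copy $G_1$ to every level'' idea fails, is that without shrinkage this sum carries a factor of $\ell$, which is $\Theta(\log |S_1|)$ in the worst case under these hypotheses. But since $0 < \alpha < 1$ and $b > 0$ (as holds for all the subsetwise constructions of interest here), we have $\alpha^b < 1$, so the tail $\sum_{j \ge 1} j\,\alpha^{b(j-1)} = (1-\alpha^b)^{-2}$ converges to a constant depending only on $\alpha$ and $b$. Therefore the total sparsity is $O(n^a |S_1|^b)$, matching the single-level bound up to the constant factor $(1-\alpha^b)^{-2}$, which completes the argument.
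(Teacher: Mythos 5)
Your proposal is correct and matches the paper's proof essentially step for step: both compute an independent single-level spanner $H_i$ at each level, nest them by taking unions downward, charge each edge of $H_j$ to at most $j$ levels, and sum the resulting arithmetico-geometric series $\sum_j j\,\alpha^{b(j-1)} = (1-\alpha^b)^{-2}$ using $\alpha^b < 1$. Your write-up is slightly more explicit about the suffix-union construction and the feasibility check, but the argument is the same.
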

\begin{proof}
Consider the following simple construction: for each $i \in \{1,2,3,\ldots,\ell\}$, compute a level-$i$ subsetwise spanner of size $O(n^a |S_i|^b)$. Consider the union of these spanners, by keeping each edge at the highest level it appears. The sparsity of the returned multi-level spanner is at most
\begin{align*}
\text{sparsity}(\{G_i\}) &= O(n^a |S_1|^b + 2n^a |S_2|^b + 3n^a |S_3|^b + \ldots + \ell n^a |S_{\ell}|^b) \\
&\le O(n^a |S_1|^b (1 + 2\alpha^b + 3\alpha^{2b} + \ldots + \ell \alpha^{(\ell-1)b})) \\
&= O(n^a |S_1|^b)
\end{align*}
where we used the arithmetico-geometric series $1 + 2(\alpha^b) + 3(\alpha^b)^2 + \ldots = \frac{1}{(1 - \alpha^b)^2}$ which is constant for fixed $\alpha$, $b$. Note that $0 < \alpha < 1$ and $b > 0$, which implies $0 < \alpha^b < 1$.
\end{proof}
The assumption that $|S_i| \le \alpha|S_{i-1}|$ for some constant $\alpha$ is fairly natural, as many realistic networks tend to have significantly fewer hubs than non-hubs.

\begin{corollary}
Under the assumption $|S_i| \le \alpha |S_{i-1}|$ for all $i \in \{2,\ldots,\ell\}$, there exists a poly-time algorithm which computes a multi-level $+2$ spanner of sparsity $O(n\sqrt{|S_1|})$.
\end{corollary}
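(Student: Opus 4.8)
The plan is to obtain this corollary as a direct specialization of the preceding theorem, whose conclusion gives a multi-level spanner of sparsity $O(n^a|S_1|^b)$ from any single-level subsetwise spanner subroutine of size $O(n^a|S|^b)$ under the nesting hypothesis $|S_i|\le\alpha|S_{i-1}|$. So the main work is just to supply the right subroutine and read off the exponents $a$ and $b$.

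First I would invoke the classical $+2$ subsetwise spanner construction of~\cite{Pettie09, Cygan13}, which on an unweighted graph produces a subsetwise $+2$ spanner over any $S$ on $O(n\sqrt{|S|})$ edges in polynomial time. Since the graph is unweighted we have $W=1$, so a $+2$ spanner coincides with the $+2W$ spanner of Definition~\ref{def:mlas}. Writing the size bound as $O(n^{1}|S|^{1/2})$ identifies $a=1$ and $b=1/2$.

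Next I would apply the preceding theorem with these constants. The hypothesis $|S_i|\le\alpha|S_{i-1}|$ is precisely the assumption in the corollary, so the theorem applies and yields sparsity $O(n^{a}|S_1|^{b}) = O(n\sqrt{|S_1|})$. Polynomial running time follows because the algorithm computes at most $\ell\le n$ single-level spanners, each in polynomial time, and then takes their union, keeping each edge at the highest level at which it appears.

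I do not expect any substantive obstacle, since this is a plug-in argument; the only point requiring care is confirming that the arithmetico-geometric bound used in the theorem's proof remains valid for the exponent $b=1/2$, which it does because $0<\alpha^{1/2}<1$ guarantees convergence of $\sum_{k\ge1} k(\alpha^{1/2})^{k-1}=(1-\alpha^{1/2})^{-2}$. The resulting $O(n\sqrt{|S_1|})$ bound matches the single-level $+2$ spanner size, and so is essentially the best one can hope for from this family of constructions.
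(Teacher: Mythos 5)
Your proposal matches the paper's proof exactly: the paper also obtains this corollary by plugging the $+2$ subsetwise construction of Cygan et al.\ (with $a=1$, $b=1/2$) into the preceding theorem. Your additional checks (that $0<\alpha^{1/2}<1$ ensures convergence of the arithmetico-geometric series, and that polynomially many polynomial-time subroutine calls suffice) are correct and only make explicit what the paper leaves implicit.
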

\begin{proof}
This follows by using the $+2$ construction by Cygan et al.~\cite{Cygan13} on $O(n\sqrt{|S|})$ edges as the subroutine.
\end{proof}

\section{Exact algorithm}\label{sec:exact_algo}

To compute a minimum size additive spanner, we utilize a slight modification of the ILP in \cite[Section 9]{ahmed2020graphElsevier}, wherein we choose the specific distortion function $f(t) = t+cW$ and minimize the sparsity rather than total weight of the spanner.  For completeness, we present the full ILP for computing a single-level additive subsetwise spanner below along with a brief description of the multi-level extension. Here $E'$ represents the bidirected edge set, obtained by adding directed edges $(u,v)$ and $(v,u)$ for each edge $uv \in E$. The binary variable $x_{(i,j)}^{uv}$ is 1 if edge $(i,j)$ is included on the selected $u$-$v$ path and 0 otherwise, and $w(e)$ is the weight of edge $e$.

% %} 
%We first provide an integer linear program to compute an optimal subsetwise spanner:

\begin{align}
    \allowdisplaybreaks
    \text{Minimize} \sum_{e \in E} x_e \text{ subject to}\\
    \sum_{(i,j) \in E'} x_{(i,j)}^{uv} w(e) &\le \dist_G(u,v) + c W & \hspace{-10pt}\forall (u,v) \in S; e = ij \label{ineq:ilp-spanner}\\
    \sum_{(i,j) \in Out(i)} x_{(i,j)}^{uv} - \sum_{(j,i) \in In(i)} x_{(j,i)}^{uv} &= \begin{cases}
        1 & i = u \\
        -1 & i = v \\
        0 & \text{else}
    \end{cases} & \hspace{-10pt} \forall (u,v) \in S; \forall i \in V \label{ineq:ilp-flow} \\
    \sum_{(i,j) \in Out(i)} x_{(i,j)}^{uv} & \le 1 &\forall (u,v) \in S; \forall i \in V \label{ineq:ilp-out}\\
    x_{(i,j)}^{uv} + x_{(j,i)}^{uv} &\le x_e & \hspace{-30pt}\forall (u,v) \in S; \forall e = \{i,j\} \in E \label{ineq:ilp-edge}\\
    x_e, x_{(i,j)}^{uv} &\in \{0,1\}
\end{align}

%The first equation is the objective of the ILP to minimize the total number of edges in the spanner. The second equation ensures that the length of the path is no more than the length of the shortest path plus $cW$. Here, the set $E'$ is generated from $E$ by taking two tuples $(i, j)$ and $(j, i)$ from each edge $e=(i, j)$. Note that this ILP is for the global setting and to use it for the local setting we can replace $W$ with $W(u, v)$ in the second equation. The third equation ensures that there is a path from $u$ to $v$ for every pair of vertices $u, v \in S$. This equation implies three types of constraints: if the vertex is a source then the net flow is 1, if it is a target then the net flow is $-1$, and the net flow is zero if the vertex is an intermediate vertex. The total outgoing flow for a pair $u, v \in S$ is less than or equal to one since we need only one path for each pair. This is formulated in the fourth equation. The fifth equation connects the flow variables with the edge variables to ensure that whenever there is a flow in an edge the corresponding edge is added to the solution. The last equation is just to make sure that all variables are binary.

%To generalize the ILP formulation to multi-level we make the following modifications:, 

%\begin{itemize}
%\item We now have indicator variables $y_{uv}^{\ell}$ (1 if edge $\{u,v\}$ is selected on level $\ell$)

%\item The flow variables are indexed by level

%\item The objective function to minimize is $\sum_{\ell=1}^k \sum_{(u,v) \in E} y_{uv}^{\ell}$
%\end{itemize}

Inequalities~\eqref{ineq:ilp-flow}--\eqref{ineq:ilp-out} enforce that for each $u$, $v \in S$, the selected edges corresponding to $u$, $v$ form a path; inequality~\eqref{ineq:ilp-spanner} enforces that the length of this path is at most $\dist_G(u,v) + cW$ (note that $W$ may be replaced with $W(u,v)$). Inequality~\eqref{ineq:ilp-edge} ensures that if $x_{(i,j)}^{uv} = 1$ or $x_{(i,j)}^{uv} = 1$, then edge $ij$ is taken.

To generalize the ILP formulation to the multi-level problem, we take a similar set of variables for every level.
%\todo{are the variables $x_{(i,j)}^{uv}$ also indexed by level? -RCS} 
The rest of the constraints are similar, except we define $x_e^{k} = 1$ if edge $e$ is present on level $k$ and the variables $x_{(i,j)}^{uv}$ are also indexed by level. We add the constraint $x_e^k \le x_e^{k-1}$ for all $k \in \{2,\ldots,\ell\}$ which enforces that if edge $e$ is present on level $k$, it is also present on all lower levels. Finally, the objective is to minimize the sparsity $\sum_{k=1}^{\ell} \sum_{e \in E}x_e^k$.

\section{Experiments}

%For our experiment we use the $+2W$ subsetwise spanner as a single level solver.
In this section, we provide experimental results involving the rounding-up framework described in Section~\ref{sec:multi_level}. This framework needs a single level subroutine; we use the $+2W$ subsetwise construction in Section~\ref{sec:subsetwise} and the three pairwise $+2W(\cdot, \cdot)$, $+4W(\cdot, \cdot)$, $+6W$ constructions provided in~\cite{ahmed2020weighted}\footnote{Note that, one can show that the $+2W$, $+4W$, $+8W$ spanners in~\cite{ahmed2020weighted} are actually $+2W(.,.), +4W(.,.)$ and $+6W$ spanners respectively by using a tighter analysis~\cite{ahmed2021weighted}.} (see Appendix~\ref{section:pairwise}).
We generate multi-level instances and solve the instances using our exact algorithm and the four approximation algorithms. We consider natural questions about how the number of levels $\ell$, number of vertices $n$, and decay rate of terminals with respect to levels affect the running times and (experimental) approximation ratios, defined as the sparsity of the returned multi-level spanner divided by $\OPT$.

We used CPLEX 12.6.2 as an ILP solver in a high-performance computer for all experiments (Lenovo NeXtScale nx360 M5 system with 400 nodes). 
Each node has 192 GB of memory. We have used Python for implementing the algorithms and spanner constructions. Since we have run the experiment on a couple of thousand instances, we run the solver for four hours.

\subsection{Experiment Parameters}

We run experiments first to test experimental approximation ratio vs. the parameters, and then to test running time vs. parameters.  Each set of experiments has several parameters: the graph generator, the number of levels $\ell$, the number of vertices $n$, and how the size of the terminal sets $S_i$ (vertices requiring level or priority at least $i$) decrease as $i$ decreases.

In what follows, we use the Erd\H{o}s--R\'{e}nyi (ER)~\cite{erdos1959random}, Watts--Strogatz (WS)~\cite{watts1998collective}, Barab\'{a}si--Albert (BA)~\cite{barabasi1999emergence}, and random geometric (GE)~\cite{penrose2003random} models. 
Let $p$ be the edge selection probability. If we set $p=(1+\varepsilon)\frac{\ln n}{n}$, then the generated Erd\H{o}s--R\'{e}nyi graph is connected with high probability for $\varepsilon>0$~\cite{erdos1959random}). For our experiments we use $\varepsilon = 1$.
In the Watts-Strogatz model, we initially create a ring lattice of constant degree $K$. For our experiments we use $K=6$ and $p=0.2$.
In the Barab\'{a}si--Albert model, a new node is connected to $m$ existing nodes. For our experiments we use $m=5$.
In the random geometric graph model, two nodes are connected to each other if their Euclidean distance is not larger than a threshold~$r_c$. For $r_c=\sqrt{\frac{(1+\epsilon)\ln n}{\pi n}}$ with $\epsilon>0$, the synthesized graph is connected with a high probability\cite{penrose2003random}.
We generate a set of small graphs ($10 \le n \le 40$) and a set of large graphs ($50 \le n \le 500$). We only compute the exact solutions for the small graphs since the ILP has an exponential running time.
In this paper, we provide the results of Erd\H{o}s--R\'{e}nyi graphs since it is the most popular model. However, the radius\footnote{The minimum over all $v \in V$ of $\max_{w \in V} d_G(v,w)$ where $d_G(v,w)$ is the graph distance (by number of edges, not total weight) between $v$ and $w$} of Erd\H{o}s--R\'{e}nyi graphs is relatively small. In our dataset, the range of the radius is 2-4. Hence, we also provide the results of random geometric graphs which have larger radius (4-12). The remaining results and the radius distribution of different generators are available at the supplement Github link.
We consider number of levels $\ell\in\{1,2,3\}$ for small graphs, $\ell\in\{1,\dots,10\}$ for large graphs, and adopt two methods for selecting terminal sets: \emph{linear} and \emph{exponential}.
A terminal set $S_1$ with lowest priority of size $n(1-\frac{1}{\ell+1})$ in the linear case and $\frac{n}{2}$ in the exponential case is chosen uniformly at random.  For each subsequent level, $\frac{1}{\ell+1}$ vertices are deleted at random in the linear case, whereas half the remaining vertices are deleted in the exponential case.  Levels/priorities and terminal sets are related via $S_i = \{v \in S_1: P(v) \ge i\}$. We choose edge weights $w(e)$ randomly, independently, and uniformly from $\{1,2,3\dots,10\}$.

An experimental instance of the multi-level problem here is thus characterized by four parameters: graph generator, number of vertices $n$, number of levels $\ell$, and terminal selection method $\textsc{TSM}\in\{\textsc{Linear,Exponential}\}$. As there is randomness involved, we generated five instances for every choice of parameters (e.g., ER, $n = 30$, $\ell=2$, \textsc{Linear}).
For each instance of the small graphs, we compute the approximate solution using either the $+2W$, $+2W(\cdot, \cdot)$, $+4W$, or $+6W$ spanner subroutine, and the exact solution using the ILP described in Section~\ref{sec:exact_algo}. We compute the experimental approximation ratio by dividing the sparsity of the approximate solution by the sparsity of the optimum solution ($\OPT$). For large graphs, we only compute the approximate solution.

%For the following experiments, we implement the approximation algorithm that uses the $+2W$ subsetwise spanner construction as a single level solver. 
%To compute the approximation ratios, we use the ILP described in Section~\ref{sec:exact_algo} using CPLEX 12.6.2 as an ILP solver. We use a high-performance computer for all experiments (Lenovo NeXtScale nx360 M5 system with 400 nodes). 
%It is a distributed system; the models of the processors in this HPC are Xeon Haswell E5-2695 Dual 14-core and Xeon Broadwell E5-2695 Dual 14-core. The speed of a processor is 2.3 GHz. There are 400 nodes each having 28 cores. 
%Each node has 192 GB of memory. We have used Python for implementing the algorithms. The time limit for each instance was four hours.
%The operating system is CentOS 6.10.

\subsection{Results}

We consider different spanner constructions as the single level subroutine in the multi-level spanner. We first consider the $+2W$ subsetwise construction (Section~\ref{sec:subsetwise}).

\subsubsection{The $+2W$ Subsetwise Construction-based Approximation}

We first describe the experimental results on Erd\H{o}s--R{\'e}nyi graphs w.r.t. $n$, $\ell$, and terminal selection method in Figure~\ref{LinePlots_ER_SUB}. The average experimental ratio increases as $n$ increases. This is expected since the theoretical approximation ratio of $\tilde{O}(n/\sqrt{|S_1|})$ is proportional to $n$. The average and minimum experimental ratio does not change that much as the number of levels increases; however, the maximum ratio increases. The experimental ratio of the linear terminal selection method is slightly better compared to that of the exponential method.

\begin{figure}[ht]
\begin{minipage}{\textwidth}
    \centering
    \begin{subfigure}[b]{0.30\textwidth}
        \includegraphics[width=\textwidth]{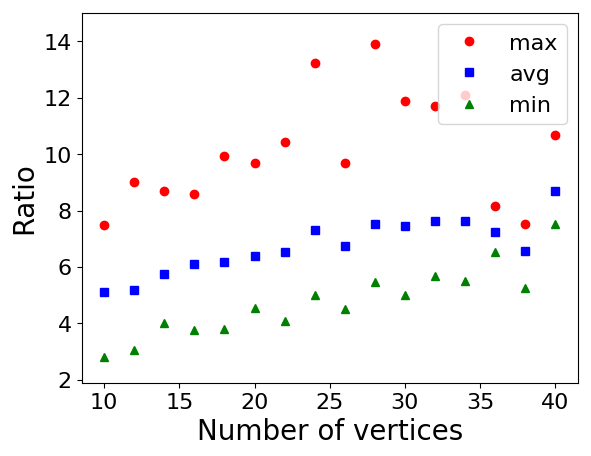}
        %\caption{}
    \end{subfigure}
    ~ 
    \begin{subfigure}[b]{0.30\textwidth}
        \includegraphics[width=\textwidth]{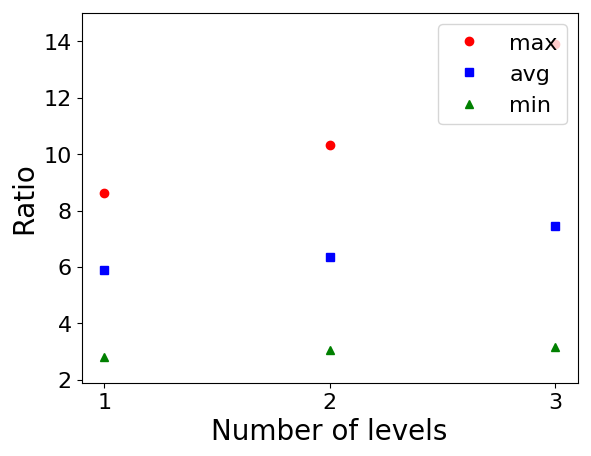}
        %\caption{}
    \end{subfigure}
    ~
    \begin{subfigure}[b]{0.30\textwidth}
        \includegraphics[width=\textwidth]{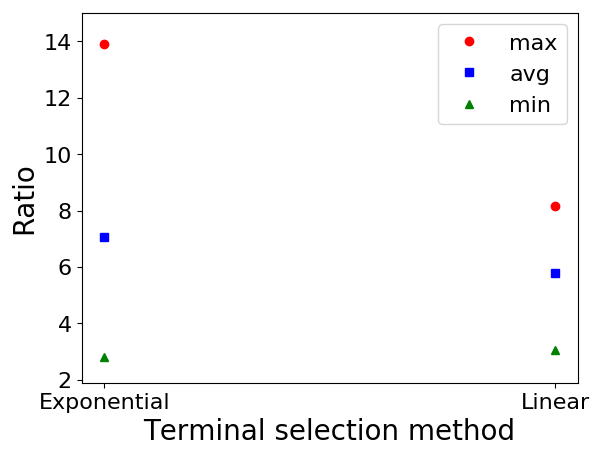}
        %\caption{}
    \end{subfigure}
    \caption{Performance of the algorithm that uses $+2W$ subsetwise spanner as the single level solver on Erd\H{o}s--R{\'e}nyi graphs w.r.t.\
      $n$, $\ell$, and terminal selection method.}
    \label{LinePlots_ER_SUB}
\end{minipage}
\end{figure}

We describe the experimental results on random geometric graphs w.r.t. $n$, $\ell$, and terminal selection methods in Figure~\ref{LinePlots_GE_SUB}.
In both cases the average ratio increases as $n$ and $\ell$ increases. The average ratio is relatively lower for the linear terminal selection method.

\begin{figure}[H]

\begin{minipage}{\textwidth}
    \centering
    \begin{subfigure}[b]{0.30\textwidth}
        \includegraphics[width=\textwidth]{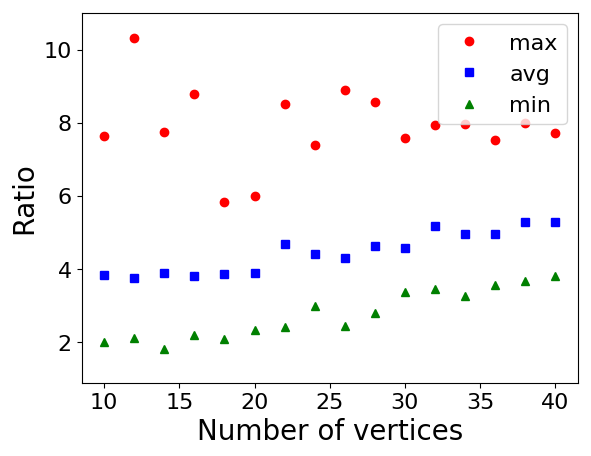}
        %\caption{}
    \end{subfigure}
    ~ 
    \begin{subfigure}[b]{0.30\textwidth}
        \includegraphics[width=\textwidth]{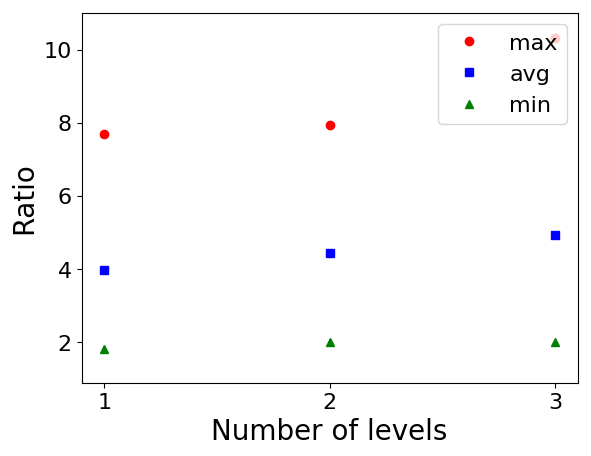}
        %\caption{}
    \end{subfigure}
    ~
    \begin{subfigure}[b]{0.30\textwidth}
        \includegraphics[width=\textwidth]{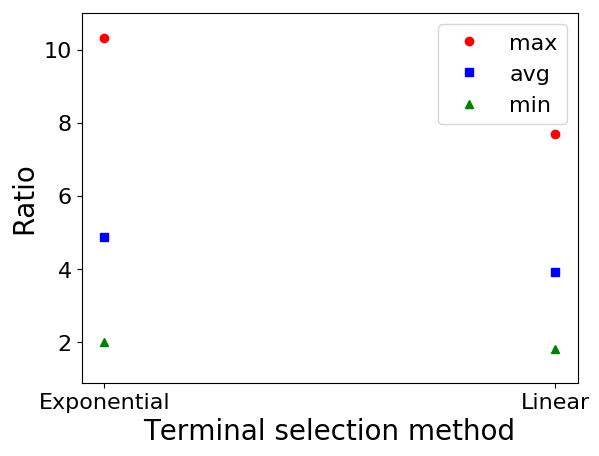}
        %\caption{}
    \end{subfigure}
    \caption{Performance of the algorithm that uses $+2W$ subsetwise spanner as the single level solver on random geometric graphs w.r.t.\
      $n$, $\ell$, and terminal selection method.}
    \label{LinePlots_GE_SUB}
\end{minipage}
\end{figure}

The plots of  Watts--Strogatz and Barab\'{a}si--Albert graphs are available in the Github repository.

\begin{comment}
We describe the experimental results on Watts--Strogatz graphs w.r.t. $n$, $\ell$, and terminal selection method in Figure~\ref{LinePlots_WS_SUB}. Surprisingly the experimental ratio does not increase as the number of vertices increases. The average and maximum ratio increases as the number of levels increases. Again, the experimental ratio of the linear terminal selection method is also slightly better compared to that of the exponential method.

\begin{figure}[ht]
\begin{minipage}{\textwidth}
    \centering
    \begin{subfigure}[b]{0.30\textwidth}
        \includegraphics[width=\textwidth]{}
        %\caption{}
    \end{subfigure}
    ~ 
    \begin{subfigure}[b]{0.30\textwidth}
        \includegraphics[width=\textwidth]{}
        %\caption{}
    \end{subfigure}
    ~
    \begin{subfigure}[b]{0.30\textwidth}
        \includegraphics[width=\textwidth]{}
        %\caption{}
    \end{subfigure}
    \caption{Performance of the algorithm that uses $+2W$ subsetwise spanner as the single level solver on Watts--Strogatz graphs w.r.t.\
      $n$, $\ell$, and terminal selection method.}
    \label{LinePlots_WS_SUB}
\end{minipage}
\end{figure}
\end{comment}

\begin{comment}
The plots of Barab\'{a}si--Albert graphs and random geometric graphs are similar to the plots of Erd\H{o}s--R{\'e}nyi graphs. Hence, we provide those plots in the Appendix.
\end{comment}

\subsubsection{The $+2W(\cdot, \cdot)$ Pairwise Construction-based Approximation}

We now consider the $+2W(\cdot, \cdot)$ pairwise construction~\cite{ahmed2020weighted} (Algorithm~\ref{alg:2W-pairwise}). We first describe the experimental results on Erd\H{o}s--R{\'e}nyi graphs w.r.t. $n$, $\ell$, and terminal selection method in Figure~\ref{LinePlots_ER_PAIR_2W}. The average experimental ratio increases as $n$ increases. This is expected since the theoretical approximation ratio is proportional to $n$. The average and minimum experimental ratio do not change that much as the number of levels increases, however, the maximum ratio increases. The experimental ratio of the linear terminal selection method is also slightly better compared to that of the exponential method.

\begin{figure}[ht]
\begin{minipage}{\textwidth}
    \centering
    \begin{subfigure}[b]{0.30\textwidth}
        \includegraphics[width=\textwidth]{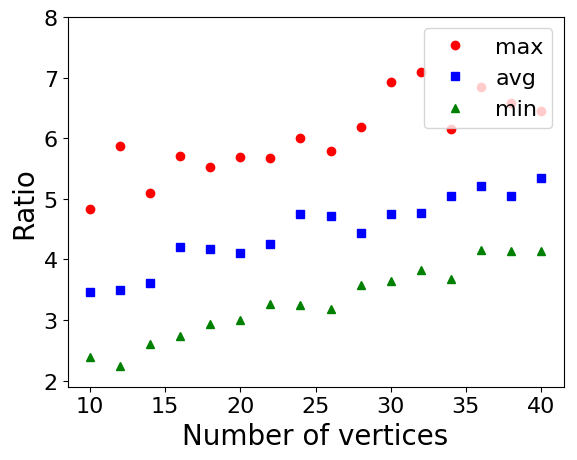}
        %\caption{}
    \end{subfigure}
    ~ 
    \begin{subfigure}[b]{0.30\textwidth}
        \includegraphics[width=\textwidth]{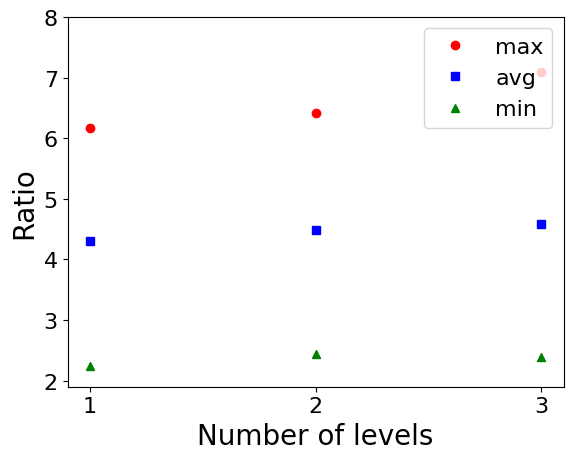}
        %\caption{}
    \end{subfigure}
    ~
    \begin{subfigure}[b]{0.30\textwidth}
        \includegraphics[width=\textwidth]{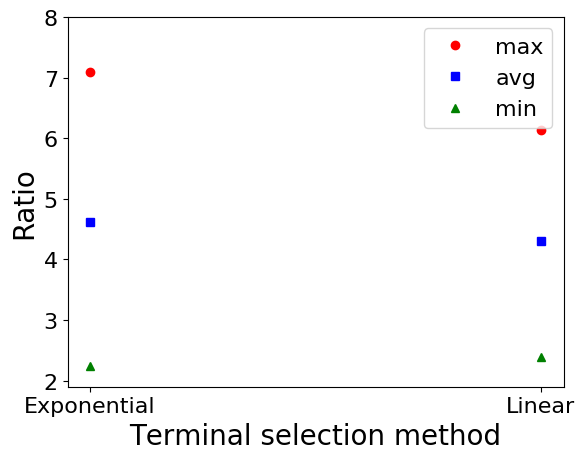}
        %\caption{}
    \end{subfigure}
    \caption{Performance of the algorithm that uses $+2W(\cdot, \cdot)$ pairwise spanner as the single level solver on Erd\H{o}s--R{\'e}nyi graphs w.r.t.\
      $n$, $\ell$, and terminal selection method.}
    \label{LinePlots_ER_PAIR_2W}
\end{minipage}
\end{figure}

We describe the experimental results on random geometric graphs w.r.t. $n$, $\ell$, and terminal selection method in  Figure~\ref{LinePlots_GE_PAIR_2W}. 
The average experimental ratio increases as $n$ increases. The maximum ratio increases as $\ell$ increases. Again, the experimental ratio of the linear terminal selection method is relatively smaller compared to the exponential method.

\begin{figure}[H]

\begin{minipage}{\textwidth}
    \centering
    \begin{subfigure}[b]{0.30\textwidth}
        \includegraphics[width=\textwidth]{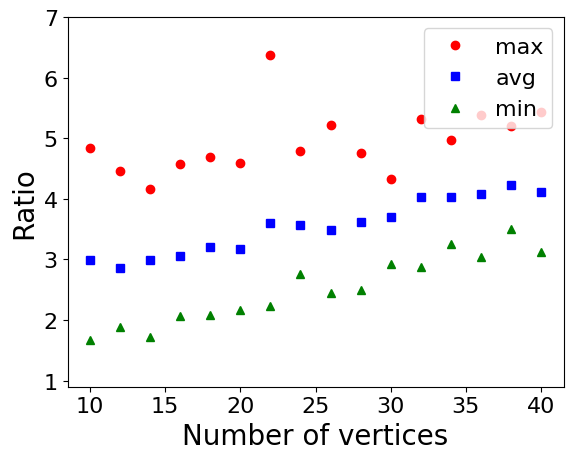}
        %\caption{}
    \end{subfigure}
    ~ 
    \begin{subfigure}[b]{0.30\textwidth}
        \includegraphics[width=\textwidth]{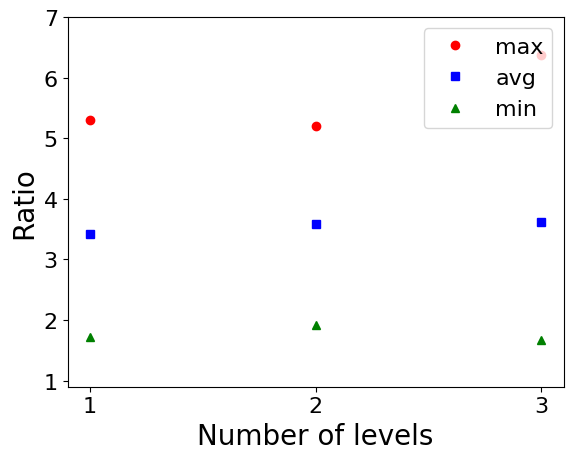}
        %\caption{}
    \end{subfigure}
    ~
    \begin{subfigure}[b]{0.30\textwidth}
        \includegraphics[width=\textwidth]{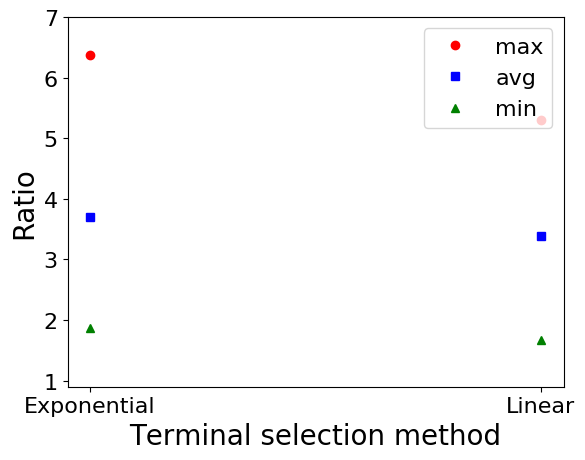}
        %\caption{}
    \end{subfigure}
    \caption{Performance of the algorithm that uses $+2W(\cdot, \cdot)$ pairwise spanner as the single level solver on random geometric graphs w.r.t.\
      $n$, $\ell$, and terminal selection method.}
    \label{LinePlots_GE_PAIR_2W}
\end{minipage}
\end{figure}

\subsubsection{Comparison between Global and Local Setups}
One major difference between the subsetwise and pairwise construction is the subsetwise construction considers the (global) maximum edge weight $W$ of the graph in the error. On the other hand, the $+cW(\cdot, \cdot)$ spanners consider the (local) maximum edge weight in a shortest path for each pair of vertices $s, t$. We provide a comparison between the global and local settings.

We describe the experimental results on Erd\H{o}s--R{\'e}nyi graphs w.r.t. $n$, $\ell$, and the terminal selection method in Figure~\ref{BoxPlots_ER_GLOB_LOC}. The average experimental ratio increases as $n$ increases for both global and local settings. However, the ratio of the local setting is smaller compared to that of the global setting. One reason for this difference is the solution to the global exact algorithm is relatively smaller since the global setting considers larger errors. The ratio of the global setting increases as the number of levels increases and for the exponential terminal selection method. For the local setting, the ratio does not change that much. %{\color{red}For the sake of space, we provide the plots of other graph generators in the appendix.}

\begin{figure}[ht]
\begin{minipage}{\textwidth}
    \centering
    \begin{subfigure}[b]{0.30\textwidth}
        \includegraphics[width=\textwidth]{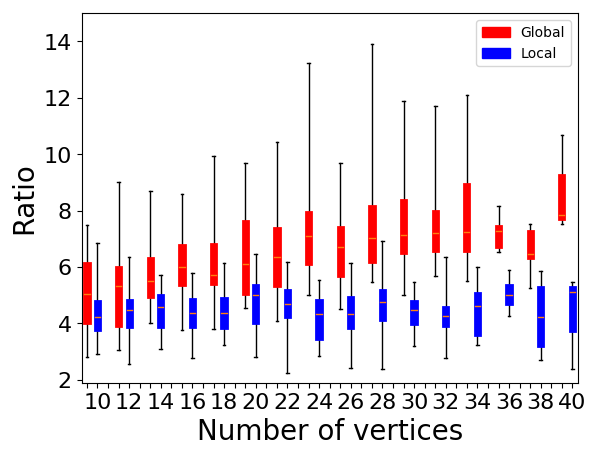}
        %\caption{}
    \end{subfigure}
    ~ 
    \begin{subfigure}[b]{0.30\textwidth}
        \includegraphics[width=\textwidth]{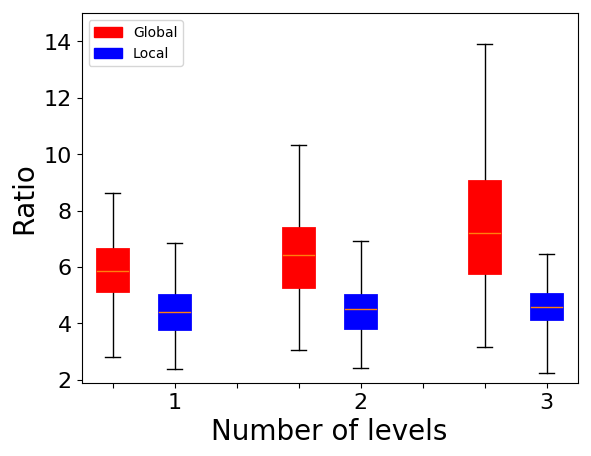}
        %\caption{}
    \end{subfigure}
    ~
    \begin{subfigure}[b]{0.30\textwidth}
        \includegraphics[width=\textwidth]{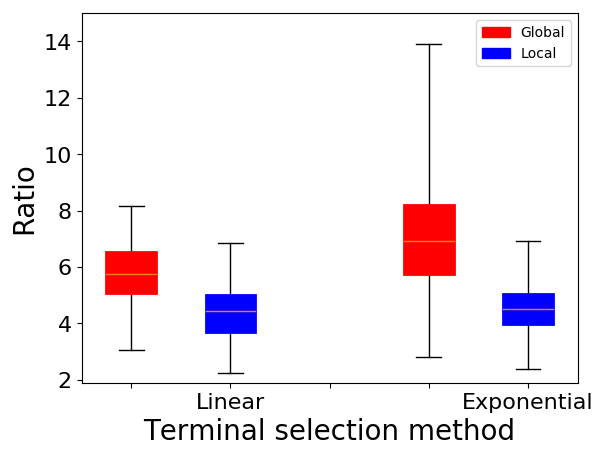}
        %\caption{}
    \end{subfigure}
    \caption{Performance of the global and local construction-based algorithms on Erd\H{o}s--R{\'e}nyi graphs w.r.t.\
      $n$, $\ell$, and terminal selection method.}
    \label{BoxPlots_ER_GLOB_LOC}
\end{minipage}
\end{figure}

We describe the experimental results on random geometric graphs w.r.t. $n$, $\ell$, and the terminal selection method in Figure~\ref{BoxPlots_GE_GLOB_LOC}. 
The ratio of the local setting is smaller compared to the global setting.

\begin{figure}[H]

\begin{minipage}{\textwidth}
    \centering
    \begin{subfigure}[b]{0.30\textwidth}
        \includegraphics[width=\textwidth]{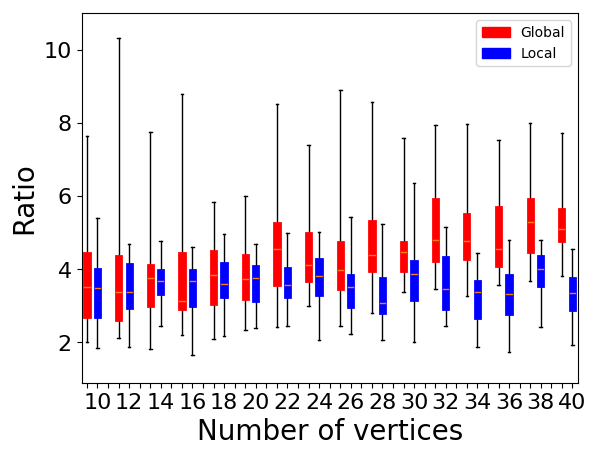}
        %\caption{}
    \end{subfigure}
    ~ 
    \begin{subfigure}[b]{0.30\textwidth}
        \includegraphics[width=\textwidth]{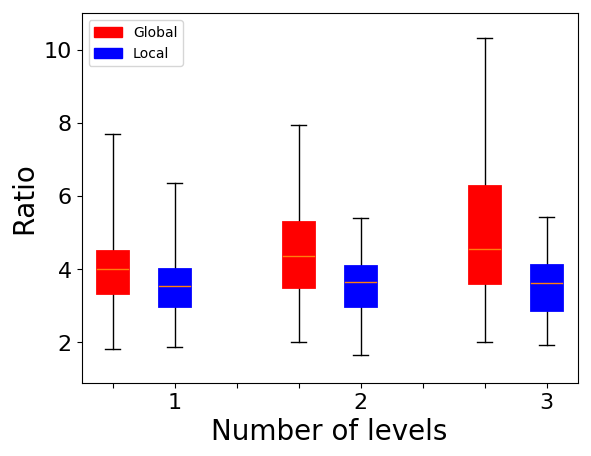}
        %\caption{}
    \end{subfigure}
    ~
    \begin{subfigure}[b]{0.30\textwidth}
        \includegraphics[width=\textwidth]{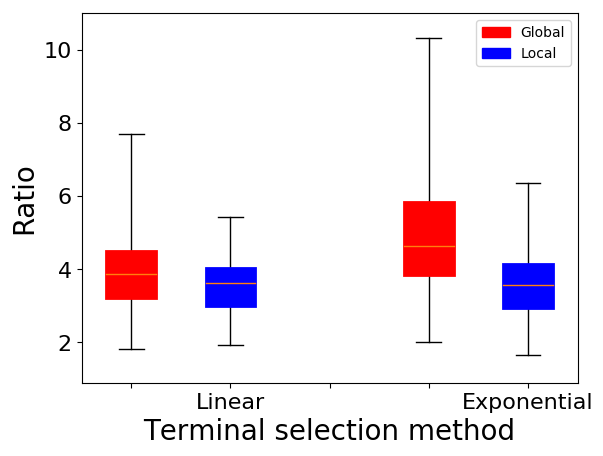}
        %\caption{}
    \end{subfigure}
    \caption{Performance of the global and local construction-based algorithms on random geometric graphs w.r.t.\
      $n$, $\ell$, and terminal selection method.}
    \label{BoxPlots_GE_GLOB_LOC}
\end{minipage}
\end{figure}

\subsubsection{The $+4W(\cdot, \cdot)$ Pairwise Construction-based Approximation}

We now consider the $+4W(\cdot, \cdot)$ pairwise construction~\cite{ahmed2020weighted} (Algorithm~\ref{alg:4W-pairwise}) as a single level subroutine. We first describe the experimental results on Erd\H{o}s--R{\'e}nyi graphs w.r.t. $n$, $\ell$, and terminal selection method in Figure~\ref{LinePlots_ER_PAIR_4W}. The average experimental ratio increases as $n$ increases. This is expected since the theoretical approximation ratio is proportional to $n$. The average experimental ratio does not change that much as the number of levels increases; however, the maximum ratio increases. The experimental ratio of the linear terminal selection method is also slightly better  compared to that of the exponential method.

\begin{figure}[ht]
\begin{minipage}{\textwidth}
    \centering
    \begin{subfigure}[b]{0.30\textwidth}
        \includegraphics[width=\textwidth]{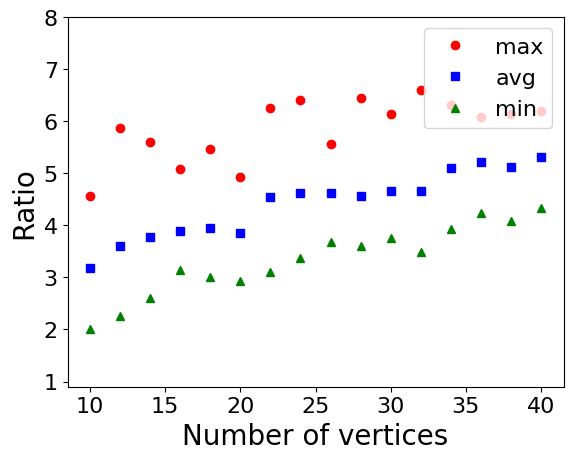}
        %\caption{}
    \end{subfigure}
    ~ 
    \begin{subfigure}[b]{0.30\textwidth}
        \includegraphics[width=\textwidth]{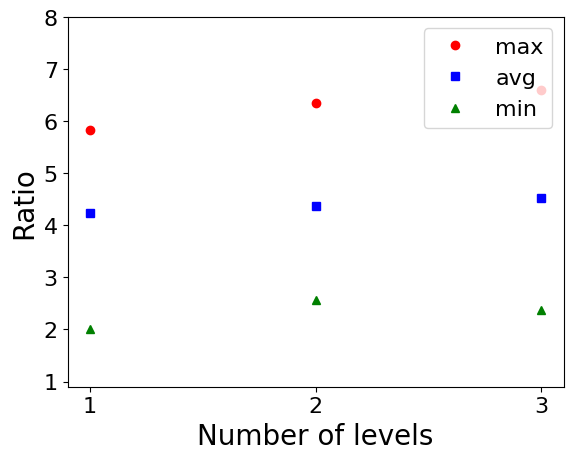}
        %\caption{}
    \end{subfigure}
    ~
    \begin{subfigure}[b]{0.30\textwidth}
        \includegraphics[width=\textwidth]{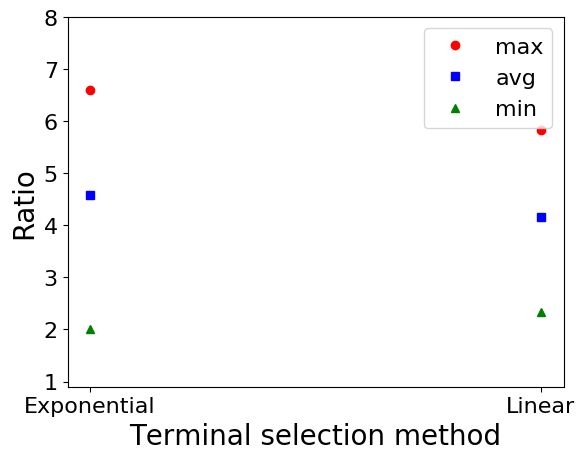}
        %\caption{}
    \end{subfigure}
    \caption{Performance of the algorithm that uses $+4W(\cdot, \cdot)$ pairwise spanner as the single level solver on Erd\H{o}s--R{\'e}nyi graphs w.r.t.\
      $n$, $\ell$, and terminal selection method.}
    \label{LinePlots_ER_PAIR_4W}
\end{minipage}
\end{figure}

We describe the experimental results on random geometric graphs w.r.t. $n$, $\ell$, and terminal selection method in  Figure~\ref{LinePlots_GE_PAIR_4W}. 
The experimental ratio increases as the number of vertices increases. The maximum ratio increases as the number of levels increases. Again, the experimental ratio of the linear terminal selection method is relatively smaller compared to the exponential method.

\begin{figure}[H]

\begin{minipage}{\textwidth}
    \centering
    \begin{subfigure}[b]{0.30\textwidth}
        \includegraphics[width=\textwidth]{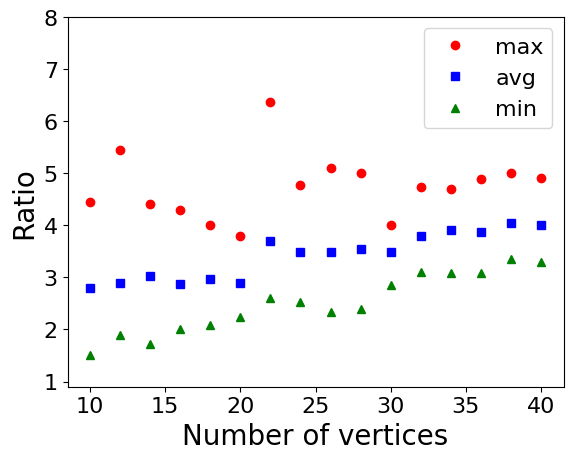}
        %\caption{}
    \end{subfigure}
    ~ 
    \begin{subfigure}[b]{0.30\textwidth}
        \includegraphics[width=\textwidth]{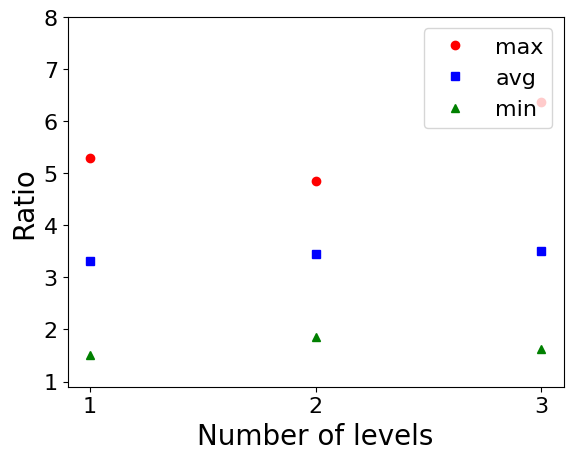}
        %\caption{}
    \end{subfigure}
    ~
    \begin{subfigure}[b]{0.30\textwidth}
        \includegraphics[width=\textwidth]{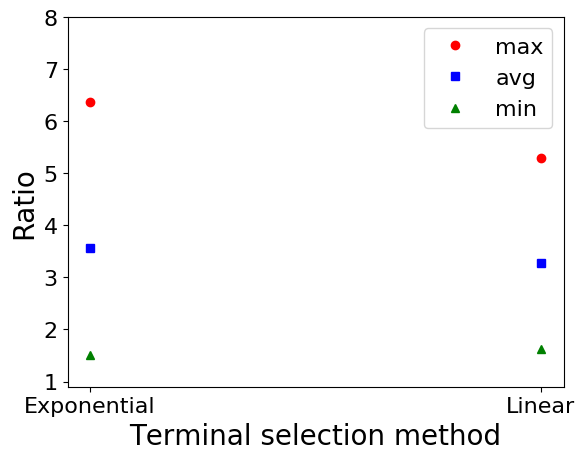}
        %\caption{}
    \end{subfigure}
    \caption{Performance of the algorithm that uses $+4W(\cdot, \cdot)$ pairwise spanner as the single level solver on random geometric graphs w.r.t.\
      $n$, $\ell$, and terminal selection method.}
    \label{LinePlots_GE_PAIR_4W}
\end{minipage}
\end{figure}

\subsubsection{Comparison between $+2W(\cdot, \cdot)$ and $+4W(\cdot, \cdot)$ Setups}
We now provide a comparison between the pairwise $+2W(\cdot, \cdot)$ and $+4W(\cdot, \cdot)$ construction-based approximation algorithms. We first describe the experimental results on Erd\H{o}s--R{\'e}nyi graphs w.r.t. $n$, $\ell$, and the terminal selection method in Figure~\ref{BoxPlots_ER_LOC_LOC}. The average experimental ratio increases as $n$ increases for both $+2W(\cdot, \cdot)$ and $+4W(\cdot, \cdot)$ settings. As $n$ increases, the ratio of the $+4W(\cdot, \cdot)$ construction-based algorithm decreases slightly. The $+4W(\cdot, \cdot)$ construction-based algorithm slightly outperforms the $+2W(\cdot, \cdot)$ algorithm for $\ell=3$ and exponential selection method.

\begin{figure}[ht]
\begin{minipage}{\textwidth}
    \centering
    \begin{subfigure}[b]{0.30\textwidth}
        \includegraphics[width=\textwidth]{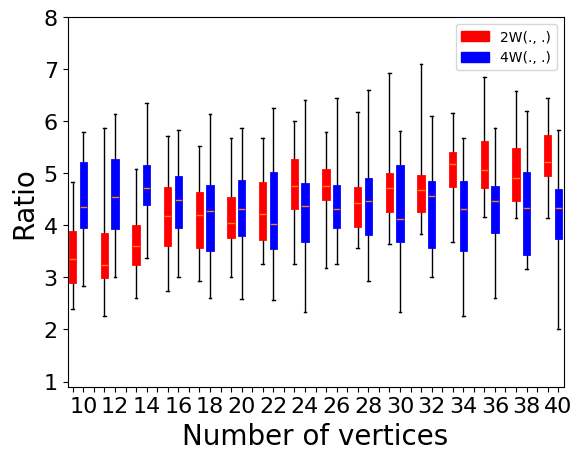}
        %\caption{}
    \end{subfigure}
    ~ 
    \begin{subfigure}[b]{0.30\textwidth}
        \includegraphics[width=\textwidth]{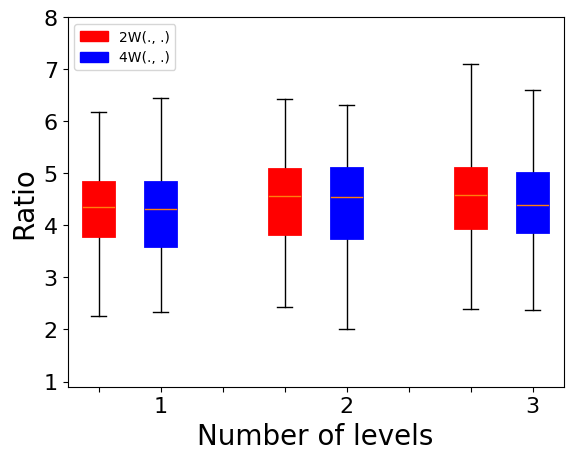}
        %\caption{}
    \end{subfigure}
    ~
    \begin{subfigure}[b]{0.30\textwidth}
        \includegraphics[width=\textwidth]{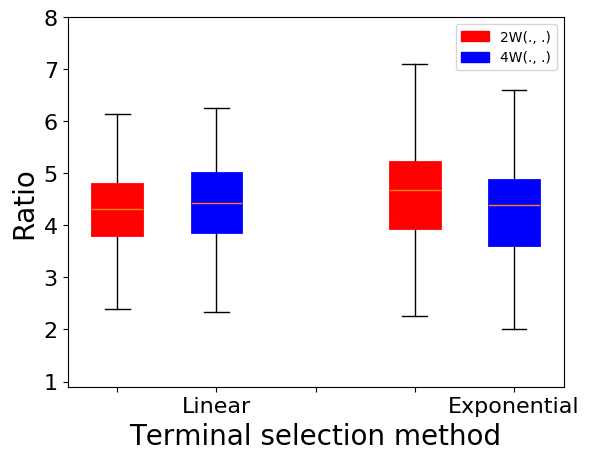}
        %\caption{}
    \end{subfigure}
    \caption{Performance of the pairwise $+2W(\cdot, \cdot)$ and $+4W(\cdot, \cdot)$ construction-based algorithms on Erd\H{o}s--R{\'e}nyi graphs w.r.t.\
      $n$, $\ell$, and terminal selection method.}
    \label{BoxPlots_ER_LOC_LOC}
\end{minipage}
\end{figure}

We describe the experimental results on random geometric graphs w.r.t. $n$, $\ell$, and the terminal selection method in Figure~\ref{BoxPlots_GE_LOC_LOC}. 
As $n$ increases the average ratio of $+4W(\cdot, \cdot)$-based approximation algorithm becomes smaller compared to the $+2W(\cdot, \cdot)$-based algorithm. The average ratio of $+4W(\cdot, \cdot)$ is relatively smaller for the exponential terminal selection method.

\begin{figure}[H]

\begin{minipage}{\textwidth}
    \centering
    \begin{subfigure}[b]{0.30\textwidth}
        \includegraphics[width=\textwidth]{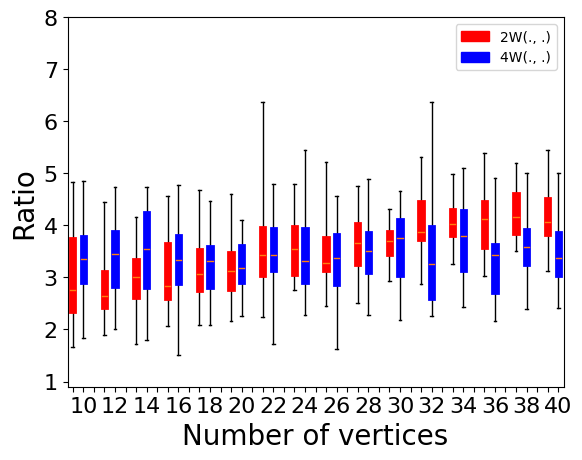}
        %\caption{}
    \end{subfigure}
    ~ 
    \begin{subfigure}[b]{0.30\textwidth}
        \includegraphics[width=\textwidth]{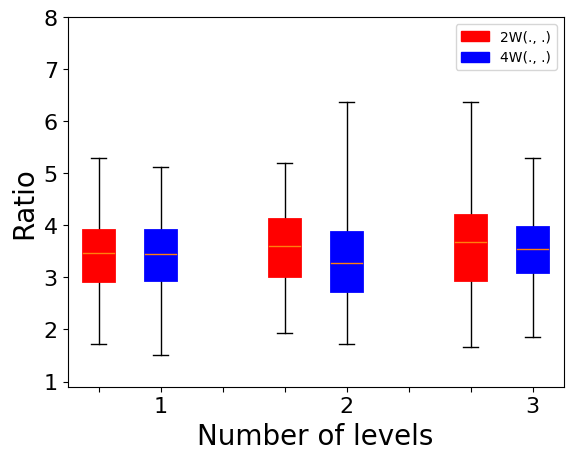}
        %\caption{}
    \end{subfigure}
    ~
    \begin{subfigure}[b]{0.30\textwidth}
        \includegraphics[width=\textwidth]{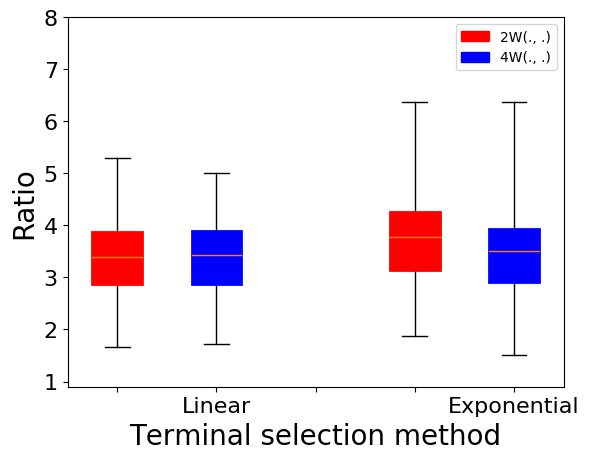}
        %\caption{}
    \end{subfigure}
    \caption{Performance of the pairwise $+2W(\cdot, \cdot)$ and $+4W(\cdot, \cdot)$ construction-based algorithms on random geometric graphs w.r.t.\
      $n$, $\ell$, and terminal selection method.}
    \label{BoxPlots_GE_LOC_LOC}
\end{minipage}
\end{figure}

\subsubsection{The $+6W$ Pairwise Construction-based Approximation}

We now consider the $+6W$ pairwise construction~\cite{ahmed2020weighted} (Algorithm~\ref{alg:8W-pairwise}) as a single level solver. We first describe the experimental results on Erd\H{o}s--R{\'e}nyi graphs w.r.t. $n$, $\ell$, and terminal selection method in Figure~\ref{LinePlots_ER_PAIR_6W}. The average experimental ratio increases as $n$ increases. This is expected since the theoretical approximation ratio is proportional to $n$. The average experimental ratio does not change that much as the number of levels increases; however, the maximum ratio increases. The maximum and average experimental ratios of the linear terminal selection method are slightly better compared to that of the exponential method.

\begin{figure}[ht]
\begin{minipage}{\textwidth}
    \centering
    \begin{subfigure}[b]{0.30\textwidth}
        \includegraphics[width=\textwidth]{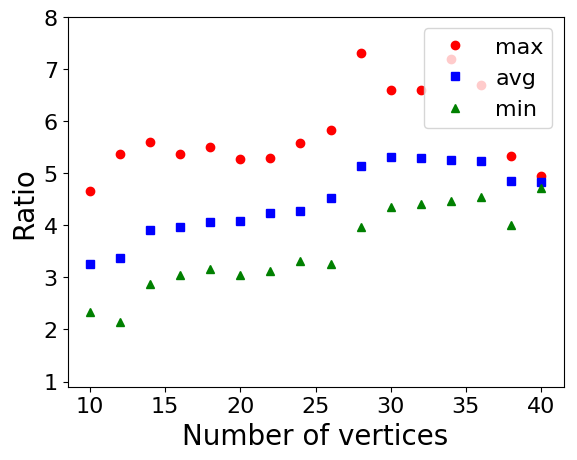}
        %\caption{}
    \end{subfigure}
    ~ 
    \begin{subfigure}[b]{0.30\textwidth}
        \includegraphics[width=\textwidth]{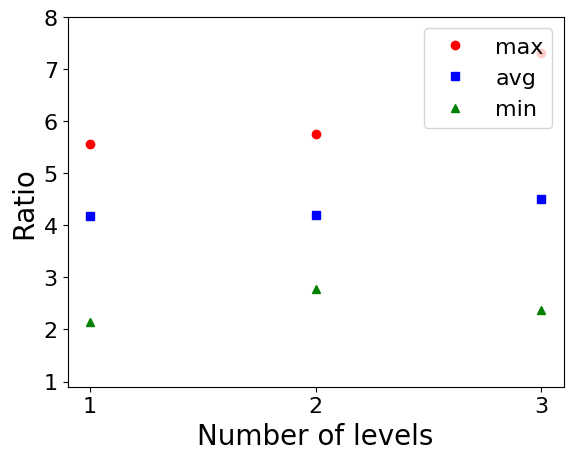}
        %\caption{}
    \end{subfigure}
    ~
    \begin{subfigure}[b]{0.30\textwidth}
        \includegraphics[width=\textwidth]{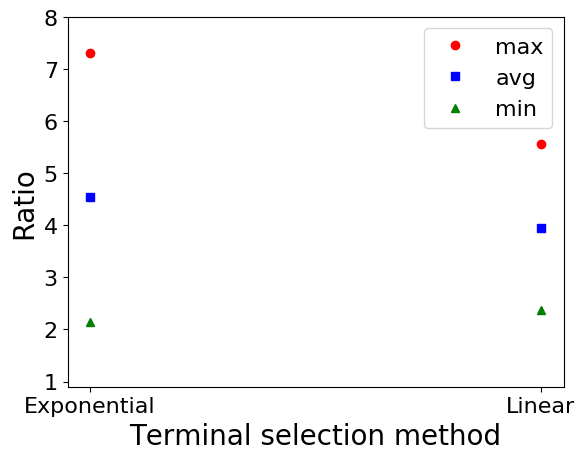}
        %\caption{}
    \end{subfigure}
    \caption{Performance of the algorithm that uses $+6W$ pairwise spanner as the single level solver on Erd\H{o}s--R{\'e}nyi graphs w.r.t.\
      $n$, $\ell$, and terminal selection method.}
    \label{LinePlots_ER_PAIR_6W}
\end{minipage}
\end{figure}

We describe the experimental results on random geometric graphs w.r.t. $n$, $\ell$, and terminal selection method in Figure~\ref{LinePlots_GE_PAIR_6W}. 
The experimental ratio increases as the number of vertices increases. The maximum ratio increases as the number of levels increases. Again, the experimental ratio of the linear terminal selection method is relatively smaller compared to the exponential method.

\begin{figure}[H]

\begin{minipage}{\textwidth}
    \centering
    \begin{subfigure}[b]{0.30\textwidth}
        \includegraphics[width=\textwidth]{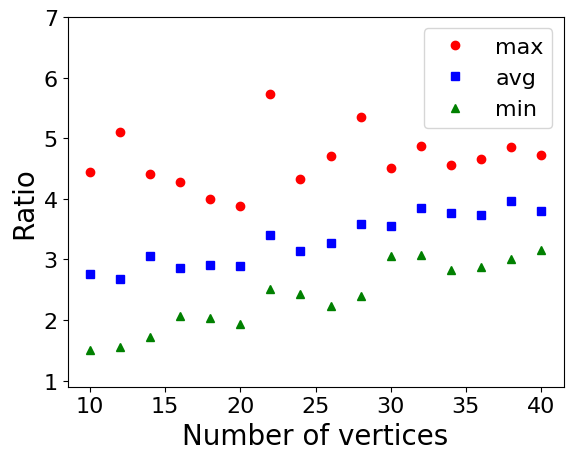}
        %\caption{}
    \end{subfigure}
    ~ 
    \begin{subfigure}[b]{0.30\textwidth}
        \includegraphics[width=\textwidth]{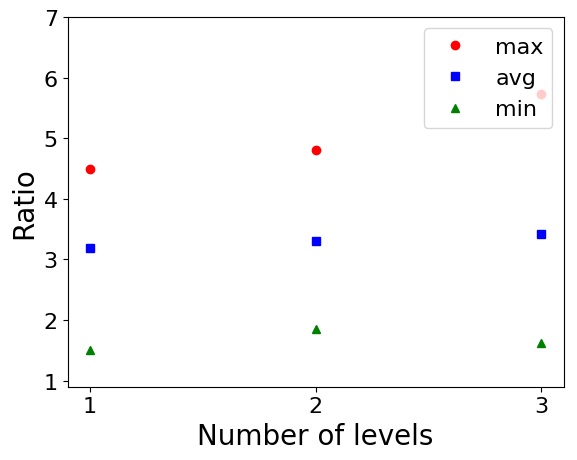}
        %\caption{}
    \end{subfigure}
    ~
    \begin{subfigure}[b]{0.30\textwidth}
        \includegraphics[width=\textwidth]{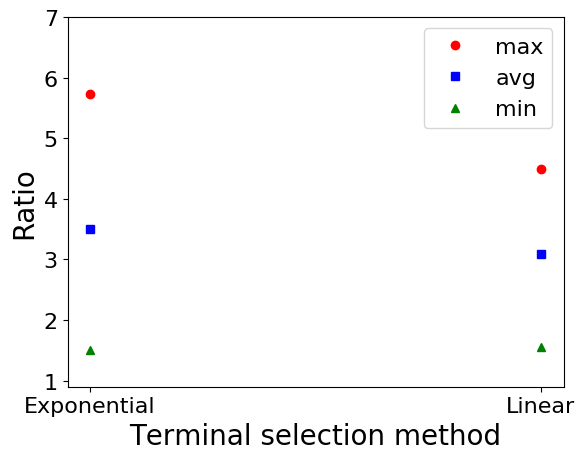}
        %\caption{}
    \end{subfigure}
    \caption{Performance of the algorithm that uses $+6W(\cdot, \cdot)$ pairwise spanner as the single level solver on random geometric graphs w.r.t.\
      $n$, $\ell$, and terminal selection method.}
    \label{LinePlots_GE_PAIR_6W}
\end{minipage}
\end{figure}

\subsubsection{Comparison between $+2W$ and $+6W$ Setups}
We now provide a comparison between pairwise $+2W$ and $+6W$ construction-based approximation algorithms. We first describe the experimental results on Erd\H{o}s--R{\'e}nyi graphs w.r.t. $n$, $\ell$, and the terminal selection method in Figure~\ref{BoxPlots_ER_GLOB_GLOB}. The average experimental ratio increases as $n$ increases for both $+2W$ and $+6W$ settings. As the number of vertices increases, the ratio of the $+6W$ construction-based algorithm gets smaller. This is expected since a larger error makes the problem easier to solve. Similarly, as $\ell$ increases, the $+6W$ construction-based algorithm outperforms the $+2W$ algorithm. The average experimental ratio of the $+6W$ construction based algorithm is smaller both in the linear and exponential terminal selection methods.

\begin{figure}[ht]
\begin{minipage}{\textwidth}
    \centering
    \begin{subfigure}[b]{0.30\textwidth}
        \includegraphics[width=\textwidth]{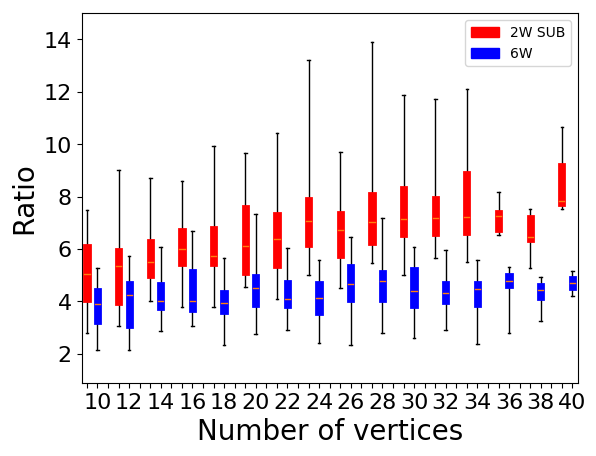}
        %\caption{}
    \end{subfigure}
    ~ 
    \begin{subfigure}[b]{0.30\textwidth}
        \includegraphics[width=\textwidth]{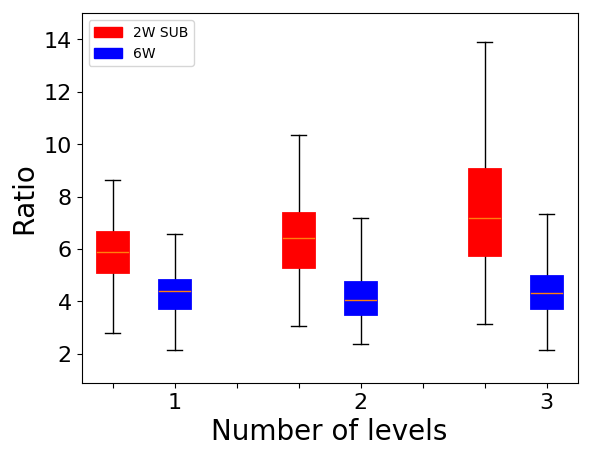}
        %\caption{}
    \end{subfigure}
    ~
    \begin{subfigure}[b]{0.30\textwidth}
        \includegraphics[width=\textwidth]{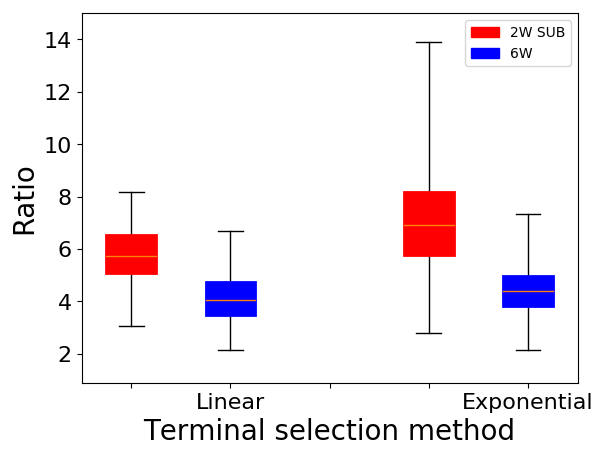}
        %\caption{}
    \end{subfigure}
    \caption{Performance of the pairwise $+2W$ and $+6W$ construction-based algorithms on Erd\H{o}s--R{\'e}nyi graphs w.r.t.\
      $n$, $\ell$, and terminal selection method.}
    \label{BoxPlots_ER_GLOB_GLOB}
\end{minipage}
\end{figure}

We describe the experimental results on random geometric graphs w.r.t. $n$, $\ell$, and the terminal selection method in Figure~\ref{BoxPlots_GE_GLOB_GLOB}. 
We can see that as $n$ gets larger the ratio of $+6W$ gets smaller. The situation is similar when $\ell$ increases.

\begin{figure}[H]

\begin{minipage}{\textwidth}
    \centering
    \begin{subfigure}[b]{0.30\textwidth}
        \includegraphics[width=\textwidth]{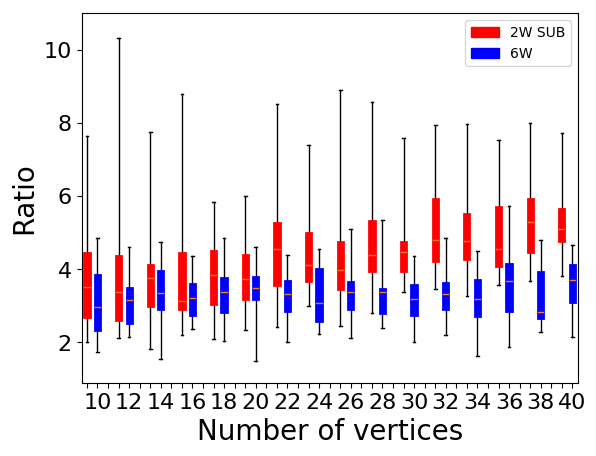}
        %\caption{}
    \end{subfigure}
    ~ 
    \begin{subfigure}[b]{0.30\textwidth}
        \includegraphics[width=\textwidth]{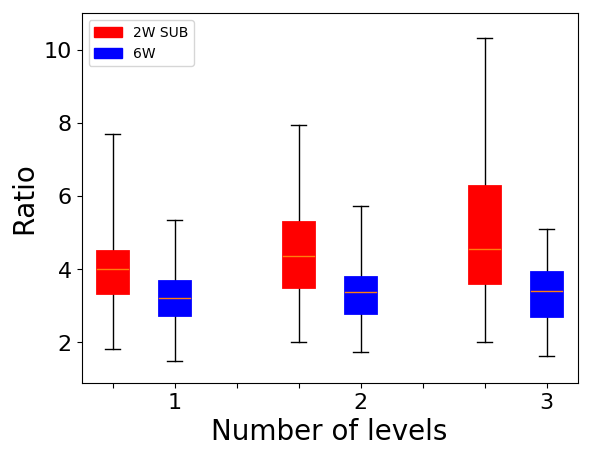}
        %\caption{}
    \end{subfigure}
    ~
    \begin{subfigure}[b]{0.30\textwidth}
        \includegraphics[width=\textwidth]{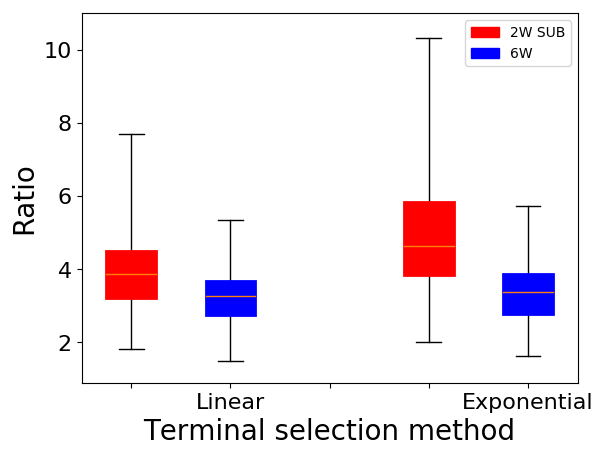}
        %\caption{}
    \end{subfigure}
    \caption{Performance of the pairwise $+2W$ and $+6W$ construction-based algorithms on random geometric graphs w.r.t.\
      $n$, $\ell$, and terminal selection method.}
    \label{BoxPlots_GE_GLOB_GLOB}
\end{minipage}
\end{figure}

\subsubsection{Experiment on Large Graphs}
\label{sec:large_graphs}
We generate some large instances on up to 500 vertices and run different multi-level spanner algorithms on them. We use $n = \{50, 100, 150, \ldots, 500\}$ and $\ell = \{1, 2, 3, \ldots, 10\}$. We describe the experimental results on Erd\H{o}s--R{\'e}nyi graphs w.r.t. $n$, $\ell$, and the terminal selection method in Figure~\ref{BoxPlots_ER_large}. We are comparing four multi-level algorithms, namely those using the $+2W$ subsetwise and $+2W(\cdot, \cdot)$, $+4W(\cdot, \cdot)$, $+6W$ pairwise constructions~\cite{ahmed2020weighted} as subroutines with $P=S \times S$. Since computing the optimal solution exactly via ILP is computationally expensive on large instances, we report the ratio in terms of relative sparsity, defined as the sparsity of the multi-level spanner returned by one algorithm divided by the minimum sparsity over the spanners returned by all four. The ratio of the $+6W$ construction based algorithm is lowest and the $+2W$ construction based algorithm is highest. This is expected since a higher additive error generally reduces the number of edges needed. Overall the ratio decreases as $n$ increases. This is because the significance of small additive error reduces as the graph size and distances get larger. The relative ratio for the $+2W$ construction increases as $\ell$ increases, and for the exponential terminal selection method.

\begin{figure}[ht]
\begin{minipage}{\textwidth}
    \centering
    \begin{subfigure}[b]{0.30\textwidth}
        \includegraphics[width=\textwidth]{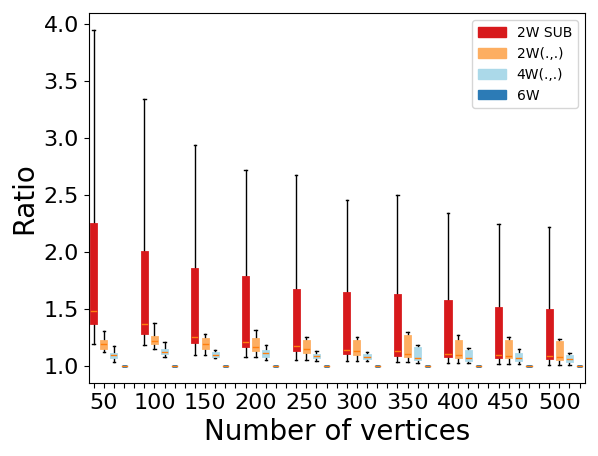}
        %\caption{}
    \end{subfigure}
    ~ 
    \begin{subfigure}[b]{0.30\textwidth}
        \includegraphics[width=\textwidth]{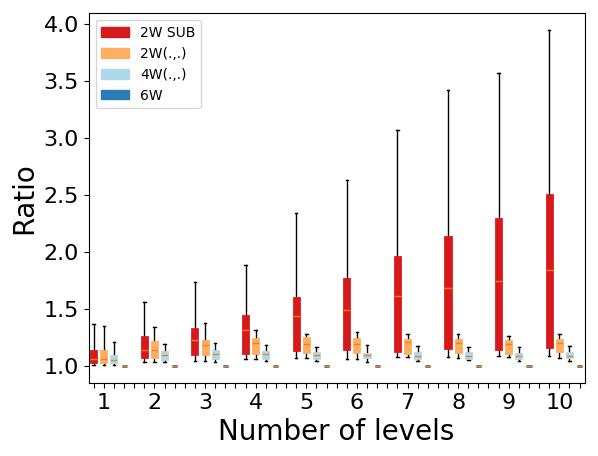}
        %\caption{}
    \end{subfigure}
    ~
    \begin{subfigure}[b]{0.30\textwidth}
        \includegraphics[width=\textwidth]{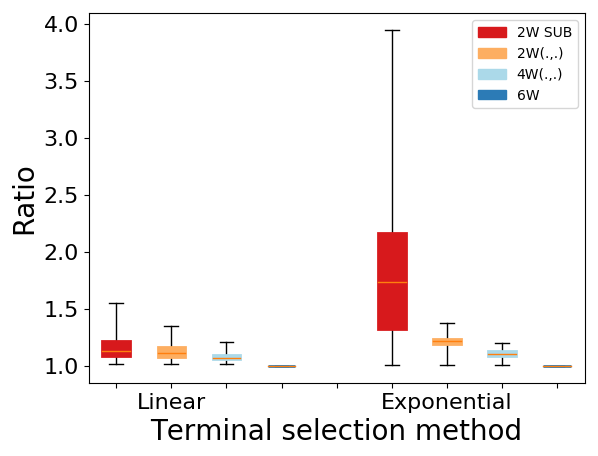}
        %\caption{}
    \end{subfigure}
    \caption{Performance of different approximation algorithms on large Erd\H{o}s--R{\'e}nyi graphs w.r.t.\
      $n$, $\ell$, and terminal selection method.}
    \label{BoxPlots_ER_large}
\end{minipage}
\end{figure}

We describe the experimental results on random geometric graphs w.r.t. $n$, $\ell$, and the terminal selection method in Figure~\ref{BoxPlots_GE_large}.

\begin{figure}[H]

\begin{minipage}{\textwidth}
    \centering
    \begin{subfigure}[b]{0.30\textwidth}
        \includegraphics[width=\textwidth]{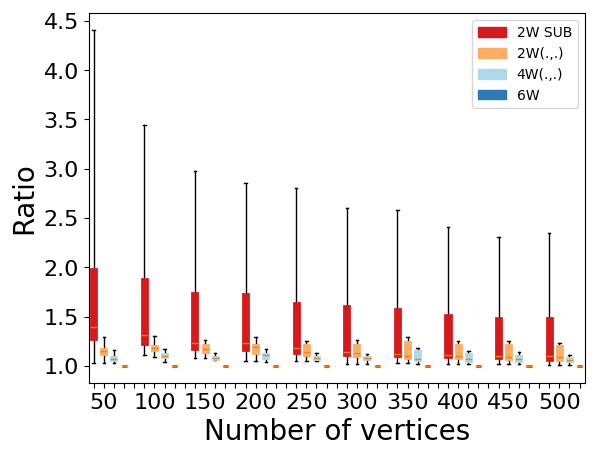}
        %\caption{}
    \end{subfigure}
    ~ 
    \begin{subfigure}[b]{0.30\textwidth}
        \includegraphics[width=\textwidth]{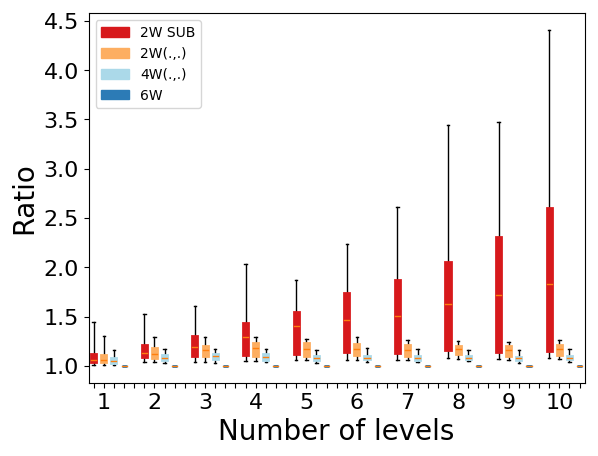}
        %\caption{}
    \end{subfigure}
    ~
    \begin{subfigure}[b]{0.30\textwidth}
        \includegraphics[width=\textwidth]{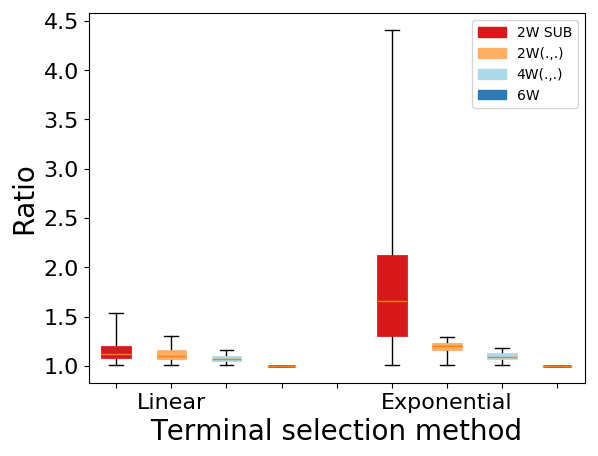}
        %\caption{}
    \end{subfigure}
    \caption{Performance of different approximation algorithms on large random geometric graphs w.r.t.\
      $n$, $\ell$, and terminal selection method.}
    \label{BoxPlots_GE_large}
\end{minipage}
\end{figure}

\subsubsection{Impact of Initial Parameters}
It is worth mentioning that the $+2$ subsetwise spanner~\cite{Cygan13} and $+2W$ subsetwise spanner (Section~\ref{sec:subsetwise}) begin with a clustering phase, while the algorithms described in Appendix~\ref{section:pairwise} begin with a $d$-light initialization. In $d$-light initialization, we add the $d$ lightest edges incident to each vertex; these edges tend to be on shortest paths. In practice, there may be relatively few edges which appear on shortest paths and some of these edges might be redundant. Hence, we compute $+2W(\cdot, \cdot)$ spanners with different values of $d$. We describe the experimental results on Erd\H{o}s--R{\'e}nyi graphs w.r.t. $n$, $\ell$, and the terminal selection method in Figure~\ref{BoxPlots_ER_d}. We have computed the ratio as described in Section~\ref{sec:large_graphs}. From the figures, we see that as we reduce the value of $d$ exponentially, the ratio decreases: in particular, the optimal choice of parameter $d$ in practice might be significantly smaller than the optimal value of $d$ in theory.
Generally, it could make sense in practical implementations of spanner algorithms to try all values $\{d, d/2, d/4, d/8, \dots\}$, computing $\log d$ different spanners, and then use only the sparsest one.
This costs only a $\log d$ factor in the running time of the algorithm, which is typically reasonable.

\begin{figure}[ht]
\begin{minipage}{\textwidth}
    \centering
    \begin{subfigure}[b]{0.30\textwidth}
        \includegraphics[width=\textwidth]{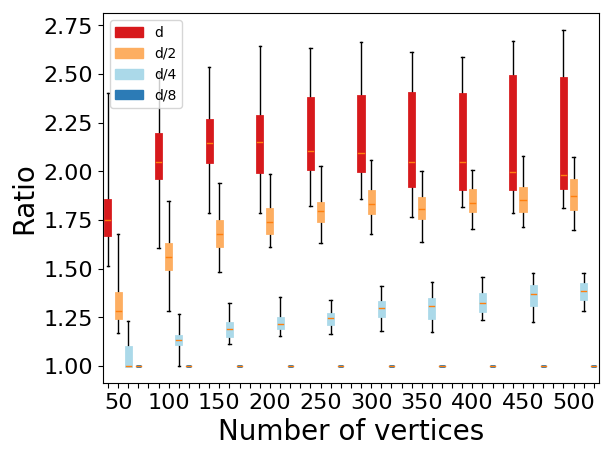}
        %\caption{}
    \end{subfigure}
    ~ 
    \begin{subfigure}[b]{0.30\textwidth}
        \includegraphics[width=\textwidth]{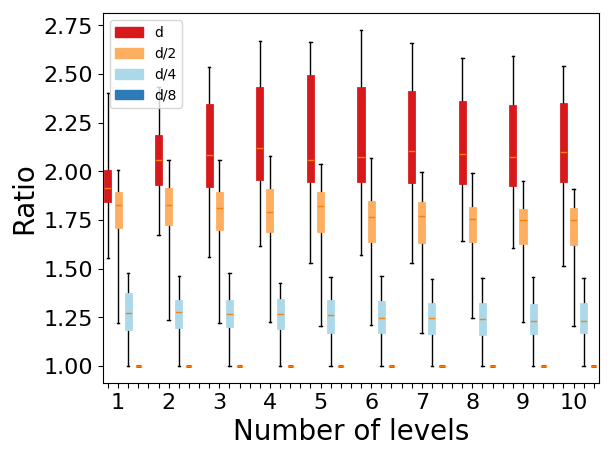}
        %\caption{}
    \end{subfigure}
    ~
    \begin{subfigure}[b]{0.30\textwidth}
        \includegraphics[width=\textwidth]{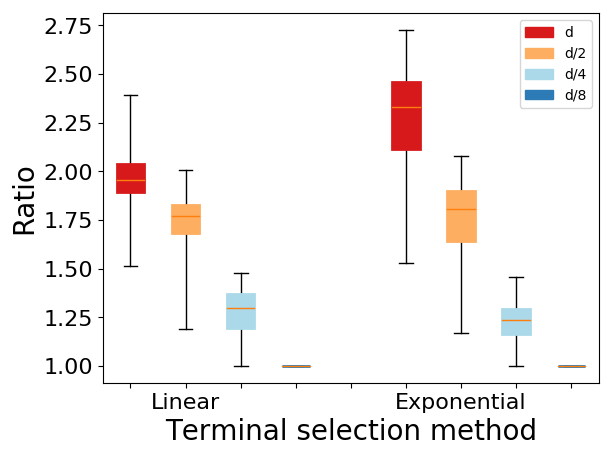}
        %\caption{}
    \end{subfigure}
    \caption{Impact of different values of $d$ on large Erd\H{o}s--R{\'e}nyi graphs w.r.t.\
      $n$, $\ell$, and terminal selection method.}
    \label{BoxPlots_ER_d}
\end{minipage}
\end{figure}

We describe the experimental results on random geometric graphs w.r.t. $n$, $\ell$, and the terminal selection method in Figure~\ref{BoxPlots_GE_d}. 
Again, the experiment suggests that we can exponentially reduce the value of $d$ and take the solution that has a minimum number of edges.

\begin{figure}[H]

\begin{minipage}{\textwidth}
    \centering
    \begin{subfigure}[b]{0.30\textwidth}
        \includegraphics[width=\textwidth]{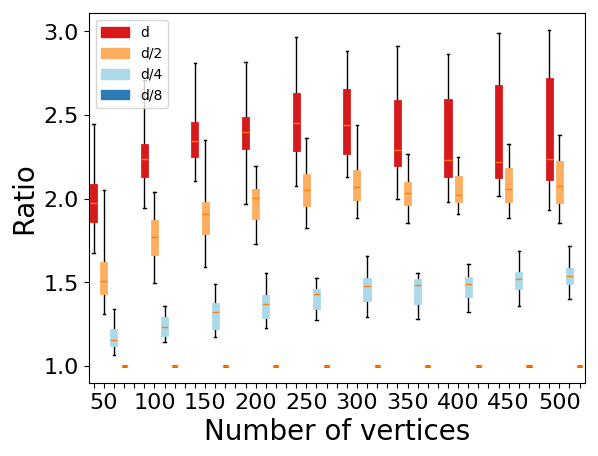}
        %\caption{}
    \end{subfigure}
    ~ 
    \begin{subfigure}[b]{0.30\textwidth}
        \includegraphics[width=\textwidth]{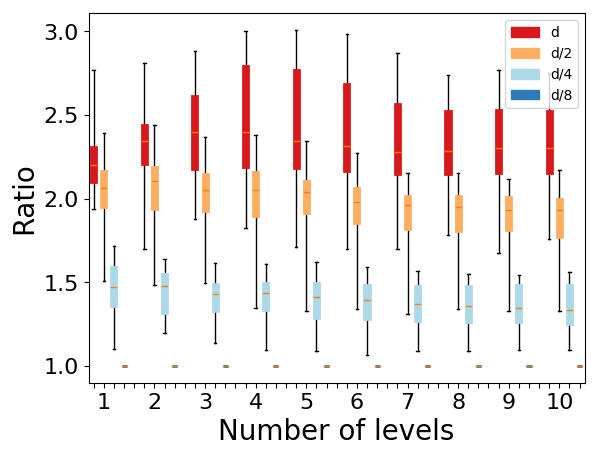}
        %\caption{}
    \end{subfigure}
    ~
    \begin{subfigure}[b]{0.30\textwidth}
        \includegraphics[width=\textwidth]{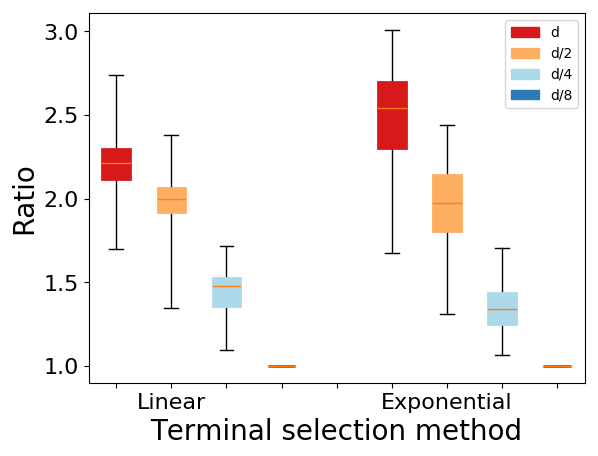}
        %\caption{}
    \end{subfigure}
    \caption{Impact of different values of $d$ on large random geometric graphs w.r.t.\
      $n$, $\ell$, and terminal selection method.}
    \label{BoxPlots_GE_d}
\end{minipage}
\end{figure}

\subsection{Running Time}

We now provide the running times of the different algorithms. %We first consider the $+2W$ subsetwise construction.
We show the running time of the ILP on Erd\H{o}s--R{\'e}nyi graphs w.r.t. $n$, $\ell$, and terminal selection method in Figure~\ref{BoxPlots_ER_time_all}. The running time of the ILP increases exponentially as $n$ increases, as expected. The execution time of a single level instance with 45 vertices is more than 64 hours using a 28 core processor. Hence, we kept the number of vertices less than or equal to 40 for our small graphs. The experimental running time should increase as $\ell$ increases, but we do not see that pattern in these plots because some of the instances were not able to finish in four hours. %The linear terminal selection method is faster compared to the exponential method.

\begin{figure}[ht]
\begin{minipage}{\textwidth}
    \centering
    \begin{subfigure}[b]{0.30\textwidth}
        \includegraphics[width=\textwidth]{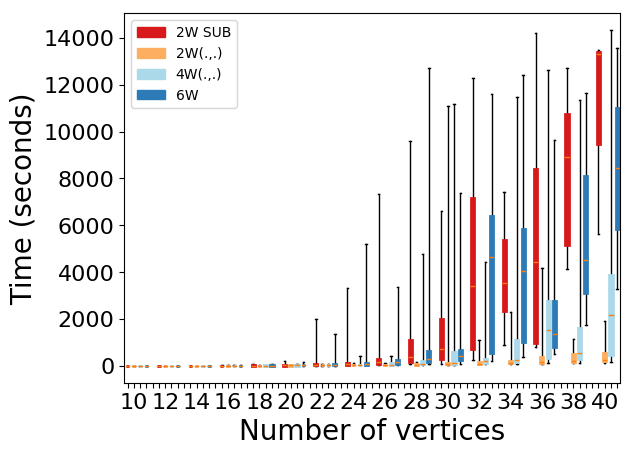}
        %\caption{}
    \end{subfigure}
    ~ 
    \begin{subfigure}[b]{0.30\textwidth}
        \includegraphics[width=\textwidth]{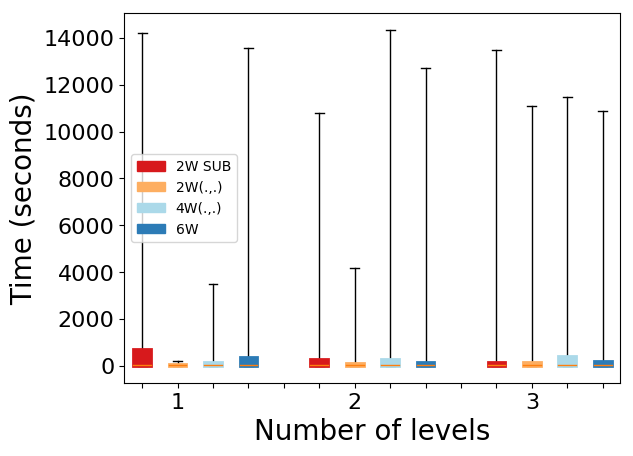}
        %\caption{}
    \end{subfigure}
    ~
    \begin{subfigure}[b]{0.30\textwidth}
        \includegraphics[width=\textwidth]{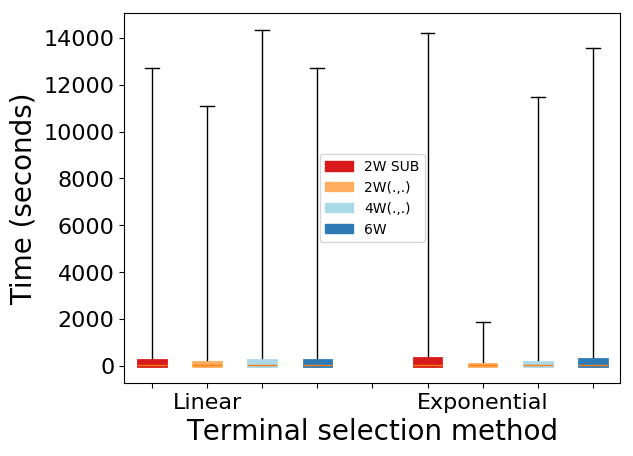}
        %\caption{}
    \end{subfigure}
    \caption{Running time of all exact algorithms on Erd\H{o}s--R{\'e}nyi graphs w.r.t.\
      $n$, $\ell$, and terminal selection method.}
    \label{BoxPlots_ER_time_all}
\end{minipage}
\end{figure}

We provide the experimental running time of the approximation algorithm on Erd\H{o}s--R{\'e}nyi graphs in Figure~\ref{BoxPlots_ER_time_apprx_large}. The running time of the $+2W$ construction-based algorithm is the largest. Overall, the running time increases as $n$ increases. There is no straightforward relation between the running time and $\ell$. Although the number of calls to the single level subroutine increases as $\ell$ increases, it also depends on the size of the subset in a single level. Hence, if the subset sizes are larger, then it may take longer for small $\ell$. The running time of the linear method is larger.

\begin{figure}[ht]
\begin{minipage}{\textwidth}
    \centering
    \begin{subfigure}[b]{0.30\textwidth}
        \includegraphics[width=\textwidth]{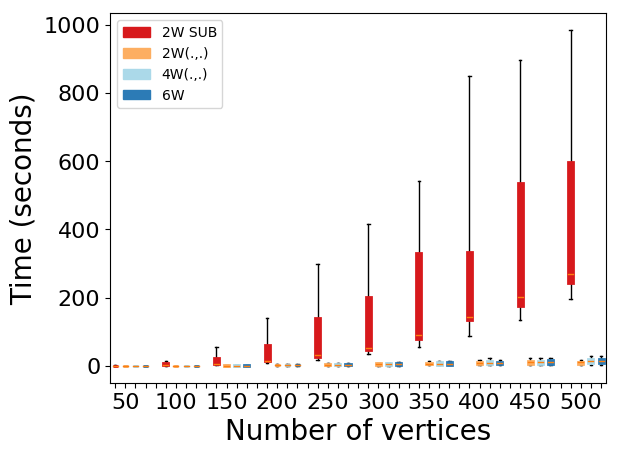}
        %\caption{}
    \end{subfigure}
    ~ 
    \begin{subfigure}[b]{0.30\textwidth}
        \includegraphics[width=\textwidth]{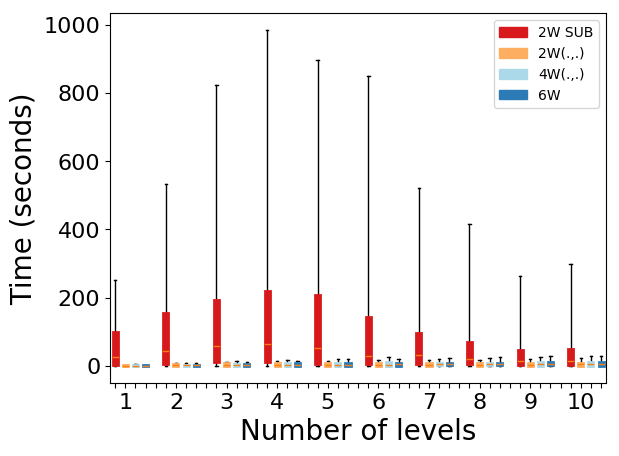}
        %\caption{}
    \end{subfigure}
    ~
    \begin{subfigure}[b]{0.30\textwidth}
        \includegraphics[width=\textwidth]{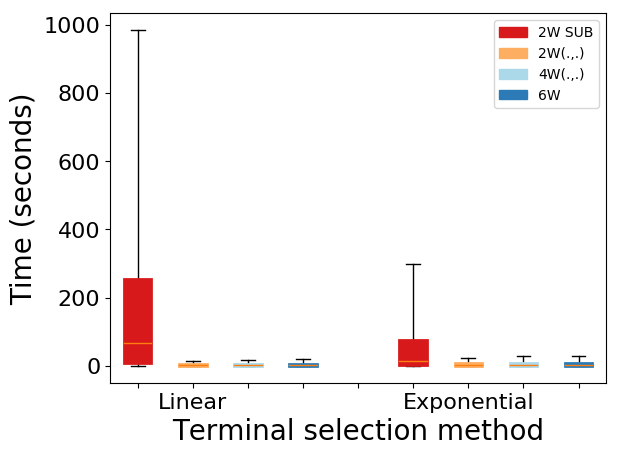}
        %\caption{}
    \end{subfigure}
    \caption{Running time of all approximation algorithms on large Erd\H{o}s--R{\'e}nyi graphs w.r.t.\
      $n$, $\ell$, and terminal selection method.}
    \label{BoxPlots_ER_time_apprx_large}
\end{minipage}
\end{figure}

The running times appear reasonable in other settings too; see the supplemental Github repository for details and experimental results.

\section{Conclusion}

We have provided a framework where we can use different spanner subroutines to compute multi-level spanners of varying additive error. We additionally introduced a generalization of the $+2$ subsetwise spanner~\cite{Cygan13} to integer edge weights, and illustrate that this can reduce the $+8W$ error in~\cite{ahmed2020weighted} to $+6W$. A natural question is to provide an approximation algorithm that can handle different additive error for different levels. We also provided an ILP to find the exact solution for both global and local settings. Using the ILP and the given spanner constructions, we have run experiments and provided the experimental approximation ratio over the graphs we generated. %; the experimental result indicates that there exists a tighter bound for the subsetwise construction of Watts--Strogatz graphs.
The experimental results suggest that the $+2W$ clustering-based approach is slower than the initialization based approaches. We provided a method of changing the initialization parameter $d$ which reduces the sparsity in practice.

\bibliography{references}

\begin{thebibliography}{10}

\bibitem{abboud20174}
Amir Abboud and Greg Bodwin.
\newblock The 4/3 additive spanner exponent is tight.
\newblock {\em Journal of the ACM (JACM)}, 64(4):1--20, 2017.

\bibitem{MLST2018}
Abu~Reyan Ahmed, Patrizio Angelini, Faryad~Darabi Sahneh, Alon Efrat, David
  Glickenstein, Martin Gronemann, Niklas Heinsohn, Stephen Kobourov, Richard
  Spence, Joseph Watkins, and Alexander Wolff.
\newblock Multi-level {S}teiner trees.
\newblock In {\em 17th International Symposium on Experimental Algorithms,
  {(SEA)}}, pages 15:1--15:14, 2018.
\newblock URL: \url{https://doi.org/10.4230/LIPIcs.SEA.2018.15}, \href
  {http://dx.doi.org/10.4230/LIPIcs.SEA.2018.15}
  {\path{doi:10.4230/LIPIcs.SEA.2018.15}}.

\bibitem{ahmed2020weighted}
Reyan Ahmed, Greg Bodwin, Faryad Darabi~Sahneh, Stephen Kobourov, and Richard
  Spence.
\newblock Weighted additive spanners.
\newblock In Isolde Adler and Haiko M{\"u}ller, editors, {\em Graph-Theoretic
  Concepts in Computer Science}, pages 401--413, Cham, 2020. Springer
  International Publishing.

\bibitem{ahmed2021weighted}
Reyan Ahmed, Greg Bodwin, Keaton Hamm, Stephen Kobourov, and Richard Spence.
\newblock Weighted sparse and lightweight spanners with local additive error.
\newblock {\em arXiv preprint arXiv:2103.09731}, 2021.

\bibitem{ahmed2020graphElsevier}
Reyan Ahmed, Greg Bodwin, Faryad~Darabi Sahneh, Keaton Hamm, Mohammad
  Javad~Latifi Jebelli, Stephen Kobourov, and Richard Spence.
\newblock Graph spanners: A tutorial review.
\newblock {\em Computer Science Review}, 37:100253, 2020.

\bibitem{ahmed2020kruskal}
Reyan Ahmed, Faryad~Darabi Sahneh, Keaton Hamm, Stephen Kobourov, and Richard
  Spence.
\newblock {Kruskal-Based Approximation Algorithm for the Multi-Level Steiner
  Tree Problem}.
\newblock In Fabrizio Grandoni, Grzegorz Herman, and Peter Sanders, editors,
  {\em 28th Annual European Symposium on Algorithms (ESA 2020)}, volume 173 of
  {\em Leibniz International Proceedings in Informatics (LIPIcs)}, pages
  4:1--4:21, Dagstuhl, Germany, 2020. Schloss Dagstuhl--Leibniz-Zentrum f{\"u}r
  Informatik.
\newblock URL: \url{https://drops.dagstuhl.de/opus/volltexte/2020/12870}, \href
  {http://dx.doi.org/10.4230/LIPIcs.ESA.2020.4}
  {\path{doi:10.4230/LIPIcs.ESA.2020.4}}.

\bibitem{Aingworth99fast}
Donald Aingworth, Chandra Chekuri, Piotr Indyk, and Rajeev Motwani.
\newblock Fast estimation of diameter and shortest paths (without matrix
  multiplication).
\newblock {\em SIAM Journal on Computing}, 28:1167--1181, 04 1999.
\newblock \href {http://dx.doi.org/10.1137/S0097539796303421}
  {\path{doi:10.1137/S0097539796303421}}.

\bibitem{alth90}
Ingo Alth{\"o}fer, Gautam Das, David Dobkin, and Deborah Joseph.
\newblock Generating sparse spanners for weighted graphs.
\newblock In {\em Scandinavian Workshop on Algorithm Theory (SWAT)}, pages
  26--37, Berlin, Heidelberg, 1990. Springer Berlin Heidelberg.

\bibitem{Balakrishnan1994}
Anantaram Balakrishnan, Thomas~L. Magnanti, and Prakash Mirchandani.
\newblock Modeling and heuristic worst-case performance analysis of the
  two-level network design problem.
\newblock {\em Management Sci.}, 40(7):846--867, 1994.
\newblock \href {http://dx.doi.org/10.1287/mnsc.40.7.846}
  {\path{doi:10.1287/mnsc.40.7.846}}.

\bibitem{barabasi1999emergence}
Albert-L{\'a}szl{\'o} Barab{\'a}si and R{\'e}ka Albert.
\newblock Emergence of scaling in random networks.
\newblock {\em science}, 286(5439):509--512, 1999.

\bibitem{baswana2010additive}
Surender Baswana, Telikepalli Kavitha, Kurt Mehlhorn, and Seth Pettie.
\newblock Additive spanners and ($\alpha$, $\beta$)-spanners.
\newblock {\em ACM Transactions on Algorithms (TALG)}, 7(1):5, 2010.

\bibitem{bodwin2017linear}
Greg Bodwin.
\newblock Linear size distance preservers.
\newblock In {\em Proceedings of the Twenty-Eighth Annual ACM-SIAM Symposium on
  Discrete Algorithms (SODA)}, pages 600--615. Society for Industrial and
  Applied Mathematics, 2017.

\bibitem{bodwin2020note}
Greg Bodwin.
\newblock A note on distance-preserving graph sparsification.
\newblock {\em arXiv preprint arXiv:2001.07741}, 2020.

\bibitem{Bodwin:2016:BDP:2884435.2884496}
Greg Bodwin and Virginia~Vassilevska Williams.
\newblock Better distance preservers and additive spanners.
\newblock In {\em Proceedings of the Twenty-seventh Annual ACM-SIAM Symposium
  on Discrete Algorithms (SODA)}, pages 855--872. Society for Industrial and
  Applied Mathematics, 2016.
\newblock URL: \url{http://dl.acm.org/citation.cfm?id=2884435.2884496}.

\bibitem{Hsien2018nearoptimal}
Hsien-Chih Chang, Pawel Gawrychowski, Shay Mozes, and Oren Weimann.
\newblock {Near-Optimal Distance Emulator for Planar Graphs}.
\newblock In {\em Proceedings of 26th Annual European Symposium on Algorithms
  (ESA 2018)}, volume 112, pages 16:1--16:17, 2018.

\bibitem{Charikar2004ToN}
M.~Charikar, J.~Naor, and B.~Schieber.
\newblock Resource optimization in {QoS} multicast routing of real-time
  multimedia.
\newblock {\em IEEE/ACM Transactions on Networking}, 12(2):340--348, April
  2004.
\newblock \href {http://dx.doi.org/10.1109/TNET.2004.826288}
  {\path{doi:10.1109/TNET.2004.826288}}.

\bibitem{chechik2013new}
Shiri Chechik.
\newblock New additive spanners.
\newblock In {\em Proceedings of the twenty-fourth annual ACM-SIAM symposium on
  Discrete algorithms (SODA)}, pages 498--512. Society for Industrial and
  Applied Mathematics, 2013.

\bibitem{chlamtavc2017approximating}
Eden Chlamt{\'a}{\v{c}}, Michael Dinitz, Guy Kortsarz, and Bundit Laekhanukit.
\newblock Approximating spanners and directed steiner forest: Upper and lower
  bounds.
\newblock In {\em Proceedings of the Twenty-Eighth Annual ACM-SIAM Symposium on
  Discrete Algorithms (SODA)}, pages 534--553. SIAM, 2017.

\bibitem{Chuzhoy2008}
Julia Chuzhoy, Anupam Gupta, Joseph~(Seffi) Naor, and Amitabh Sinha.
\newblock On the approximability of some network design problems.
\newblock {\em ACM Trans. Algorithms}, 4(2):23:1--23:17, 2008.
\newblock \href {http://dx.doi.org/10.1145/1361192.1361200}
  {\path{doi:10.1145/1361192.1361200}}.

\bibitem{coppersmith2006sparse}
Don Coppersmith and Michael Elkin.
\newblock Sparse sourcewise and pairwise distance preservers.
\newblock {\em SIAM Journal on Discrete Mathematics}, 20(2):463--501, 2006.

\bibitem{Cygan13}
Marek Cygan, Fabrizio Grandoni, and Telikepalli Kavitha.
\newblock On pairwise spanners.
\newblock In {\em Proceedings of 30th International Symposium on Theoretical
  Aspects of Computer Science (STACS 2013)}, volume~20, pages 209--220, 2013.
\newblock URL: \url{http://drops.dagstuhl.de/opus/volltexte/2013/3935}, \href
  {http://dx.doi.org/10.4230/LIPIcs.STACS.2013.209}
  {\path{doi:10.4230/LIPIcs.STACS.2013.209}}.

\bibitem{elkin2019almost}
Michael Elkin, Yuval Gitlitz, and Ofer Neiman.
\newblock Almost shortest paths and {PRAM} distance oracles in weighted graphs.
\newblock {\em arXiv preprint arXiv:1907.11422}, 2019.

\bibitem{elkin2020improved}
Michael Elkin, Yuval Gitlitz, and Ofer Neiman.
\newblock Improved weighted additive spanners.
\newblock {\em arXiv preprint arXiv:2008.09877}, 2020.

\bibitem{erdos1959random}
Paul Erd{\H o}s and Alfr{\'e}d R{\'e}nyi.
\newblock On random graphs, i.
\newblock {\em Publicationes Mathematicae (Debrecen)}, 6:290--297, 1959.

\bibitem{Karpinski2005}
Marek Karpinski, Ion~I. Mandoiu, Alexander Olshevsky, and Alexander Zelikovsky.
\newblock Improved approximation algorithms for the quality of service
  multicast tree problem.
\newblock {\em Algorithmica}, 42(2):109--120, 2005.
\newblock \href {http://dx.doi.org/10.1007/s00453-004-1133-y}
  {\path{doi:10.1007/s00453-004-1133-y}}.

\bibitem{Kavitha2017}
Telikepalli Kavitha.
\newblock New pairwise spanners.
\newblock {\em Theory of Computing Systems}, 61(4):1011--1036, Nov 2017.
\newblock URL: \url{https://doi.org/10.1007/s00224-016-9736-7}, \href
  {http://dx.doi.org/10.1007/s00224-016-9736-7}
  {\path{doi:10.1007/s00224-016-9736-7}}.

\bibitem{Kavitha15}
Telikepalli Kavitha and Nithin~M. Varma.
\newblock Small stretch pairwise spanners and approximate $d$-preservers.
\newblock {\em SIAM Journal on Discrete Mathematics}, 29(4):2239--2254, 2015.
\newblock URL: \url{https://doi.org/10.1137/140953927}, \href
  {http://arxiv.org/abs/https://doi.org/10.1137/140953927}
  {\path{arXiv:https://doi.org/10.1137/140953927}}, \href
  {http://dx.doi.org/10.1137/140953927} {\path{doi:10.1137/140953927}}.

\bibitem{liestmannp}
Arthur Liestman and Thomas Shermer.
\newblock Additive graph spanners.
\newblock {\em Networks}, 23:343 -- 363, 07 1993.
\newblock \href {http://dx.doi.org/10.1002/net.3230230417}
  {\path{doi:10.1002/net.3230230417}}.

\bibitem{mirchandani1996MTT}
Prakash Mirchandani.
\newblock The multi-tier tree problem.
\newblock {\em INFORMS J. Comput.}, 8(3):202--218, 1996.

\bibitem{penrose2003random}
Mathew Penrose.
\newblock {\em Random geometric graphs}.
\newblock Number~5. Oxford university press, 2003.

\bibitem{Pettie09}
Seth Pettie.
\newblock Low distortion spanners.
\newblock {\em ACM Transactions on Algorithms (TALG)}, 6(1):7, 2009.

\bibitem{watts1998collective}
Duncan~J Watts and Steven~H Strogatz.
\newblock Collective dynamics of ‘small-world’networks.
\newblock {\em Nature}, 393(6684):440, 1998.

\end{thebibliography}
\newpage

\appendix

\end{document}